\theoremstyle{plain}
\newtheorem{theorem}{Theorem}
\newtheorem{lemma}{Lemma}
\newtheorem{definition}{Definition}
\newtheorem{proposition}{Proposition}
\newtheorem{claim}{Claim}
\theoremstyle{definition}
\newcommand{\Bo}{\mathrm{Box}}
\newcommand{\bord}{\partial}
\newcommand{\Uu}{\mathcal{U}}
\newcommand{\Vv}{\mathcal{V}}
\newcommand{\Ww}{\mathcal{W}}
\newcommand{\Gg}{\mathcal{G}}
\newcommand{\Cc}{\mathcal{C}}
\newcommand{\Ramp}{\mathcal{R}}
\newcommand{\Rgen}{\mathfrak{R}}
\newcommand{\Nn}{\mathcal{N}}
\newcommand{\ceil}[1]{\left\lceil #1 \right\rceil}
\newcommand{\ket}[1]{| #1 \rangle}
\newcommand{\floor}[1]{\left\lfloor #1 \right\rfloor}
\newcommand{\Pp}{\mathcal{P}}
\begin{document}
\title{Average-Case Quantum Advantage with Shallow Circuits}
\author{
Fran{\c c}ois Le Gall\\
Graduate School of Informatics\\
Kyoto University\\
\url{legall@i.kyoto-u.ac.jp}
}
\date{}

\maketitle
\thispagestyle{empty}
\setcounter{page}{0}
\begin{abstract}
Recently Bravyi, Gosset and K\"onig (Science 2018) proved an unconditional separation between the computational powers of small-depth quantum and classical circuits for a relation. In this paper we show a similar separation in the average-case setting that gives stronger evidence of the superiority of small-depth quantum computation: we construct a computational task that can be solved on all inputs by a quantum circuit of constant depth with bounded-fanin gates (a ``shallow'' quantum circuit) and show that any classical circuit with bounded-fanin gates solving this problem on a non-negligible fraction of the inputs must have logarithmic depth. Our results are obtained by introducing a technique to create quantum states exhibiting global quantum correlations from any graph, via a construction that we call the \emph{extended graph}.

Similar results have been very recently (and independently) obtained by Coudron, Stark and Vidick (arXiv:1810.04233), and Bene Watts, Kothari, Schaeffer and Tal (STOC 2019).
\end{abstract}
%\vspace{100mm}
%%%%%%%%%%%%%%%%%%%%%%%%%%%%%%%
\newpage
%%%%%%%%%%%%%%%%%%%%%%%%%%%%%%%
\section{Introduction}
%There are several evidences of the superiority of quantum computing 
\subsection{Background and our results}
A fundamental problem in quantum complexity theory is to prove the superiority of quantum computation over classical computation. While this has been shown in constrained models of computation such as query complexity (see for instance \cite{Ambainis18} for a recent survey), in weak models of computation like finite-state automata \cite{Kondacs+FOCS97}, and when considering relativized complexity classes (see, e.g., \cite{Bernstein+SICOMP97} for the first results and \cite{Raz+18} for the most recent breakthrough), no definite answer is known in standard computational models such as Turing machines or general circuits. Indeed, since the complexity class $\mathsf{BQP}$ corresponding to the problems that can be solved efficiently by a quantum computer satisfies the inclusions $\mathsf{P}\subseteq \mathsf{BQP}\subseteq \mathsf{PSPACE}$, unconditionally separating $\mathsf{P}$ and $\mathsf{BQP}$ cannot be shown without separating $\mathsf{P}$ and $\mathsf{PSPACE}$.

A recent active research area focuses on conditionally showing the superiority of quantum computation. Under several assumptions from computational complexity such as non-collapse of the polynomial hierarchy, the superiority of quantum computation with respect to classical computation has been shown in the standard circuit model in the worst-case setting \cite{Aaronson+STOC11,Aaronson+A14,Aaronson+CCC17,Bremner+10,Fahri+16,Fujii+16,Fujii+PRL18,Morimae+PRL14,Terhal+04} and even in the average-case setting \cite{Aaronson+STOC11,Aaronson+A14,Aaronson+CCC17,Bouland+18,Bremner+PRL16,Bremner+17,Fujii+16}. Note that showing the superiority in the average-case setting is a much stronger evidence of the superiority of quantum computation than a proof for the worst-case setting.

A recent breakthrough by Bravyi, Gosset and K\"onig \cite{Bravyi+17,Bravyi+18} showed an \emph{unconditional} separation between the computational powers of quantum and classical small-depth circuits: they constructed a computational problem that can be solved by quantum circuits of constant depth with bounded-fanin\footnote{In this paper the term \emph{bounded-fanin} means, as usual, that the fanin is bounded from above by a constant.} gates (``shallow quantum circuits'') and showed that any classical circuit with bounded fanin gates solving this problem on all inputs must have depth $\Omega(\log m)$, where $m$ denotes the input size. Besides being the first such unconditional separation in the circuit model, this separation is also especially important since shallow quantum circuits are likely to be the easiest quantum circuits to experimentally implement, due to their robustness to noise and decoherence. (Note that separations were already known when allowing gates with unbounded fanin or fanout \cite{Green+02, Hoyer+TOC05,Takahashi+CCC16}. The strength of Bravyi, Gosset and K\"onig's result is that it holds for the weaker model of quantum circuits with bounded fanin and fanout.)

%The classical lower bound in \cite{Bravyi+17} crucially requires the classical circuit to output the correct answer (with high probability) on each input, i.e., this is only a worst-case hardness result. Showing the advantages of shallow quantum circuits for a distribution (i.e., proving a corresponding average-case hardness result), which would give a significantly stronger evidence of the advantage of quantum shallow circuits, is discussed in \cite[Section 5]{Bravyi+17} and referred to as a ``challenging open question''.
The original classical lower bound shown in \cite{Bravyi+17} required the classical circuit to output the correct answer (with high probability) on each input, i.e., this was only a worst-case hardness result. Showing the advantages of shallow quantum circuits for a distribution (i.e., proving a corresponding average-case hardness result), which would give a significantly stronger evidence of the advantage of quantum shallow circuits, was discussed in \cite[Section 5]{Bravyi+17} and referred to as a ``challenging open question''. The recently published journal version \cite{Bravyi+18} partially answers this open question: it presents an average-case lower bound showing that any classical circuit that outputs the correct answer on a constant fraction of some restricted subset of the inputs (which can be efficiently sampled) must have logarithmic depth. In other words, it shows that any sublogarithmic-depth classical algorithm will fail with some constant probability on a input chosen uniformly at random in this restricted subset. 

%\paragraph{Our results.}
In this work we give a stronger average-case hardness result. Our main result is the following theorem.
%This paper generalizes \cite{Bravyi+17}. It is inspired from \cite{LeGall+ArXiv18}.
\begin{theorem}\label{th:main-amp}
There exists a relation $\Ramp\subseteq  \{0,1\}^{M}\times \{0,1\}^N$ for which the following two assertions hold.\footnote{As usual in computational complexity, the subset $\Ramp\subseteq \{0,1\}^M\times \{0,1\}^N$ is interpreted as the following computational problem: given an input $x\in\{0,1\}^M$, output any element of the set $\{z\in\{0,1\}^N\:|\:(x,z)\in\Ramp\}$. Through this paper we will use the convenient notation $\Ramp(x)=\{z\in\{0,1\}^N\:|\:(x,z)\in\Ramp\}$, for any $x\in\{0,1\}^N$.}
\begin{itemize}
\item
There is a constant-depth quantum circuit with bounded-fanin gates (i.e., a shallow quantum circuit) that on any input $x\in\{0,1\}^{M}$ outputs an element in the set $\Ramp(x)$ with probability 1. 
\item
There is a constant $\gamma>0$ such that any randomized circuit $C$ with bounded-fanin gates satisfying
\[
\frac{1}{2^{M}}\sum_{x\in\{0,1\}^{M}}\Pr[C(x)\in \Ramp(x)]\ge \frac{1}{\exp(\gamma \sqrt{M})}
\]
has depth $\Omega(\log M)$.
\end{itemize}
\end{theorem}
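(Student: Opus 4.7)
The plan is to combine a single-shot hardness result established earlier in the paper (via the extended graph construction) with a parallel-repetition-style amplification. Let $\Ramp_0 \subseteq \{0,1\}^m \times \{0,1\}^n$ denote the base relation from the preceding sections: a constant-depth quantum circuit with bounded fanin solves it with probability $1$, while any bounded-fanin classical circuit of depth $o(\log m)$ correctly outputs an element of $\Ramp_0(x)$ for at most a $1-\delta$ fraction of uniformly random inputs, for some absolute constant $\delta>0$. I would choose $m=k=\lceil \sqrt{M}\rceil$, so that the classical depth lower bound $\Omega(\log m)$ matches the target $\Omega(\log M)$, and define $\Ramp$ to be the $k$-fold parallel composition of $\Ramp_0$: write inputs as $x=(x_1,\ldots,x_k)$ with $x_i\in\{0,1\}^m$, outputs as $z=(z_1,\ldots,z_k)$ with $z_i\in\{0,1\}^n$, and declare $(x,z)\in\Ramp$ iff $(x_i,z_i)\in\Ramp_0$ for every $i$.

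The quantum upper bound is essentially free: place $k$ disjoint copies of the shallow quantum circuit for $\Ramp_0$ side by side. The depth and fanin do not change, and since each copy succeeds with probability $1$ so does the combined circuit, giving the first bullet of the theorem.

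For the classical lower bound, suppose toward contradiction that some bounded-fanin classical circuit $C$ of depth $d=o(\log M)$ achieves average success at least $1/\exp(\gamma\sqrt{M})$. The key step is a direct-product lemma asserting that the probability (over uniform $x$) of $C$ simultaneously outputting valid answers on all $k$ copies is at most $(1-\delta+o(1))^k = \exp(-\Omega(\sqrt{M}))$, which contradicts the hypothesis for sufficiently small $\gamma$. To prove this lemma, I would first fix $C$'s internal randomness; each output bit is then a deterministic function of its light cone, whose size is at most $2^{O(d)} = M^{o(1)}$ and in particular much smaller than the block size $m$. Fixing all blocks $x_j$ with $j\neq i$ leaves a depth-$d$ bounded-fanin classical circuit on $x_i$ alone, to which the single-shot hardness of $\Ramp_0$ applies and yields an average failure probability of at least $\delta$ on copy $i$.

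The main technical obstacle is turning these per-copy hardness bounds into a genuine joint bound: single-copy hardness alone only gives that the marginal success on each copy is at most $1-\delta$, whereas the theorem requires the $k$-fold product. Correlations between copies induced by wires of $C$ that lie in the shared light cones could in principle make the joint success probability much larger than the product of marginals, and standard direct-product theorems (for instance those for two-party communication complexity) do not immediately transfer to the shallow-circuit setting. My plan to close this gap is to exploit the small light cones via a hybrid argument, conditioning one copy at a time on the event ``copies $1,\ldots,i-1$ succeed'' and showing that the conditioned circuit remains a shallow circuit on the remaining blocks that inherits the single-copy hardness of $\Ramp_0$ on block $i$; composing $k$ such conditional bounds would then yield the $(1-\delta+o(1))^k$ upper bound on the joint success and finish the proof.
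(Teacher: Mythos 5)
Your overall strategy (parallel repetition of a base relation, free quantum upper bound, then a classical lower bound that decays geometrically in the number of copies) matches the paper's at a high level, and you are right to isolate the joint-versus-marginal-success issue as the crux. But the specific mechanism you propose to close that gap does not work, and the paper's mechanism is genuinely different.

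The problem with your hybrid argument is that conditioning on the event ``copies $1,\ldots,i-1$ succeed'' does not yield an object to which the single-copy hardness can be re-applied. The base hardness result (the paper's Theorem \ref{th:main}) says: \emph{for the uniform distribution on inputs}, every bounded-fanin circuit of depth $o(\log m)$ has average success $< 1-\alpha$. After you condition on a success event---which depends on $x_1,\ldots,x_{i-1}$, on the randomness $r$, and in general (through shared light cones) on $x_i,\ldots,x_k$ as well---the induced distribution on $x_i$ is no longer uniform, and there is no ``conditioned circuit'' in the syntactic sense: conditioning is a probabilistic operation on the distribution, not a circuit transformation. So neither the ``uniform input'' hypothesis nor the ``bounded-fanin shallow circuit'' hypothesis of the base theorem is available to you inside the hybrid, and the step ``inherits the single-copy hardness of $\Ramp_0$ on block $i$'' has no justification. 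This is exactly the kind of correlation problem that makes direct-product theorems nontrivial, and naming a hybrid does not resolve it.

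The paper (Theorem \ref{th:parallel}) sidesteps direct products entirely. Because every output wire of a depth-$c\log_2 m$ bounded-fanin circuit has a light cone of size at most $m^c$, a counting and greedy-independent-set argument (Claim \ref{claim:I}) extracts a subset $I$ of $|I|\ge t/(6nm^c+2)$ copies such that, once the inputs of all copies outside $I$ and the randomness are fixed, the outputs $T_i$ for $i\in I$ depend \emph{syntactically} on disjoint input blocks $S_i$. Independence of the per-copy success events is then automatic (no conditioning, no distribution shift), and the single-copy hardness---which applies to deterministic circuits, hence after fixing $r$---gives $(1-\alpha)^{|I|}$ directly. The cost is that $|I|$ is only a $1/(nm^c)$ fraction of $t$, which is why the paper takes $t$ to be a large polynomial in $m$ (roughly $t = \Theta((nm^{1/8})^3)$, so $t\ge m^{27/8}$), not $t=m$. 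Your choice $k=m=\lceil\sqrt M\rceil$ is in fact incompatible with this route: with $n=\Theta(m)$ the bound $t/(6nm^c+2)$ is below $1$ when $t=m$, so the paper's extraction yields no independent copies at all. To make the selection-based argument work you must take the number of copies polynomially larger than $m$, exactly as the paper does.

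In short: the quantum half and the high-level plan are fine, but the proposed direct-product lemma is not established, the conditioning step is the gap, and the parameter choice $k=m$ precludes the fix the paper actually uses.
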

Theorem \ref{th:main-amp} thus shows the existence of a computational problem that can be solved by a shallow quantum circuit on all inputs but such that any classical circuit with bounded-fanin gates solving this problem on a non-negligible fraction of the inputs must have logarithmic depth. This gives an average-case result that is a strengthening of the average-case result from \cite{Bravyi+18} with respect to two aspects. First, our lower bound holds for any classical circuit that solves the problem on a non-negligible  fraction of the inputs (even exponentially small), and not only on a constant fraction. Second, our statement does not make any restriction on the set of inputs for which the hardness is established, i.e., it shows that any sublogarithmic-depth classical algorithm will fail with high probability on an input chosen uniformly at random in the whole set $\{0,1\}^M$.

%===========================================
\subsection{Overview of our techniques}
\paragraph{Main technical result.}
Our central technical result is the following theorem.
\begin{theorem}\label{th:main}
There exists a relation $R\subseteq \{0,1\}^m\times \{0,1\}^n$ for which the following two assertions hold.
\begin{itemize}
\item
There is a constant-depth quantum circuit with bounded-fanin gates (i.e., a shallow quantum circuit) that on any input $x\in\{0,1\}^m$ outputs a string in the set $R(x)$ with probability 1. 
\item
There is a constant $\alpha>0$ such that any randomized circuit $C$ with bounded-fanin gates satisfying
\[
\frac{1}{2^m}\sum_{x\in\{0,1\}^m}\Pr[C(x)\in R(x)]\ge 1-\alpha
\]
has depth $\Omega(\log m)$.
\end{itemize}
\end{theorem}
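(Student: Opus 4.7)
My plan is to adapt the worst-case lower bound of Bravyi--Gosset--K\"onig to an average-case statement through the \emph{extended graph} construction. The starting point is a relation derived from a graph-state / 2D Hidden Linear Function problem on an underlying graph $\Gk$ of size roughly $m$: on input $x \in \{0,1\}^m$ the task is to output a string $z \in \{0,1\}^n$ satisfying a system of local XOR constraints whose coefficients are dictated by $x$. The shallow quantum algorithm initializes each qubit in $\ket{+}$, applies a CZ along every edge of $\Gk$ (schedulable in $O(1)$ parallel layers by edge-coloring, since $\Gk$ has bounded degree), measures each qubit in a Pauli basis determined by $x$, and outputs the outcomes; by the stabilizer structure of the graph state these outcomes satisfy $R$ with probability one.

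For the classical lower bound, the core tool is a lightcone argument: any circuit of depth $d$ with fanin at most $K$ has output lightcones of size $K^d = m^{o(1)}$ whenever $d = o(\log m)$. My plan is to partition $\Gk$ into $\Theta(m/K^{2d})$ disjoint \emph{test regions}, chosen so that the outputs associated with different regions depend on disjoint sub-blocks of $x$. Within each test region the problem reduces to a small contextuality/nonlocal game that, by the BGK-style argument, any local deterministic strategy fails on a positive fraction of inputs. The role of the extended graph is to glue these small hard instances onto $\Gk$ so that bounded degree is preserved (keeping the quantum circuit shallow) while the constraints on different test regions involve effectively independent randomness drawn from $x$. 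Given such independence, the failure events over test regions are essentially independent, and a Markov-style averaging argument yields the sought-after constant $\alpha > 0$ bounding the average success probability away from $1$.

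The main obstacle I expect is in the design of the extended graph so that three properties hold simultaneously: (i) bounded degree, preserving quantum shallowness; (ii) a number of pairwise lightcone-disjoint test regions large enough to amplify per-region hardness into the stated $1-\alpha$ bound; and (iii) genuine independence between the hard distributions seen in different test regions, rather than mere lightcone-disjointness. Achieving (iii) together with (i)--(ii) is the crux of the argument: the routine parts (the graph-state upper bound and the lightcone calculation) are standard, but proving that the extended graph induces independent hard sub-instances, so that the BGK per-region failure probability survives conditioning on the input bits outside the region, will likely require a careful combinatorial/coupling analysis of how the constraints of $R$ decompose across the extended graph.
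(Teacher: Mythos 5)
Your outer shell is right: graph state in constant depth, lightcone sizes $m^{o(1)}$ for sublogarithmic depth, and the extended graph as the organizing device. But the heart of the classical lower bound — how one actually extracts a failing input from a lightcone structure — is not the one in the paper, and the part you flag as the crux is exactly where your plan is missing an idea. The paper does not partition into $\Theta(m/K^{2d})$ test regions and amplify; for Theorem~\ref{th:main} a \emph{single} test region suffices. Concretely, one finds one triple of control vertices $(u,v,w)$ with pairwise-disjoint input lightcones together with a cycle $\Cc$ through $u,v,w$ whose three arcs avoid the appropriate lightcones (Proposition~\ref{prop1}, a counting argument over all $\Theta(n^3)$ triples). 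Fixing the circuit's internal randomness $r$ and all input bits $a$ outside $\{u,v,w\}$, lightcone-disjointness makes the relevant output parities affine functions of $(b_u,b_v,b_w)$, and the affine-function lemma (Lemma~\ref{lemma:affine}, from Barrett et al.) forces a wrong answer on at least one of the $8$ assignments. Combined with the fact that $|a|$ is even with probability $1/2$, this already gives failure probability $\geq 1/16$ on a uniform input — no product of many regions, no independence, no concentration.

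Where you would really get stuck is the issue you yourself identify as the obstacle: ``genuine independence between the hard distributions seen in different test regions \dots will likely require a careful combinatorial/coupling analysis.'' In the paper this disappears entirely, and it disappears \emph{because} of what the extended graph actually buys, which is not a way to ``glue hard instances'' but the deterministic parity structure of Theorems~\ref{th:cond1} and~\ref{th:cond2}: every cycle of $\overline{G}$ has a zero outcome-parity over the inserted vertices, and every $f$-covering gives a fixed parity determined only by $|f|\bmod 4$. The covering paths for the marked vertices outside $\{u,v,w\}$ can always be chosen disjoint from the cycle $\Cc$ (because $\overline{\Gg}_d$ stays connected after deleting $\Cc$, by Conditions~(ii-a) and~(ii-b)), so the relevant parity is an affine function of $(b_u,b_v,b_w)$ that is pinned down by the graph-state constraints regardless of the values of $a$ and $r$. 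This is why the hardness holds on all of $\{0,1\}^m$ without adapting the measured graph to the circuit. The multi-copy amplification you gesture at is in the paper, but as a separate, purely generic direct-product step (Theorem~\ref{th:parallel}) used to go from Theorem~\ref{th:main} to Theorem~\ref{th:main-amp}; it plays no role inside the proof of Theorem~\ref{th:main} itself.
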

Theorem \ref{th:main-amp} is obtained from Theorem \ref{th:main} by amplifying the soundness using standard techniques: the relation $\Ramp$ is obtained by taking the direct product of $t$ copies of the relation $R$ for some sufficiently large integer $t$ (the sizes of the inputs and outputs in $\Ramp$ are thus $M=mt$ and $N=nt$). We show in Section \ref{sec:amp} how the soundness is then amplified from $1-\alpha$ to $(1-\alpha)^{t'}$ with $t'\approx t$ by this process and observe that $(1-\alpha)^{t'}$ is upper bounded by $1/\exp(\gamma \sqrt{M})$ for some constant $\gamma$. Note that this approach can also be applied to amplify the soundness of the average-case result in \cite{Bravyi+18}, which directly gives a result similar to Theorem \ref{th:main-amp} (but for a hard distribution that is not simply the uniform distribution).

\paragraph{Techniques from prior works.}
Before presenting our techniques we first describe how the result from \cite{Bravyi+17,Bravyi+18} was obtained. A central technical tool is a simple but fascinating result by Barrett et al.~\cite{Barrett+PRA07} that shows that correlations arising from local entanglement cannot be simulated classically without global interaction. This result was also used recently to show a separation between quantum and classical distributed computing \cite{LeGall+ArXiv18}. More precisely, \cite{Barrett+PRA07} considers the problem of simulating the measurement outcomes that occur when measuring each qubit of a well-chosen quantum state on $n$ qubits (the graph state associated with a cycle of length $n$) in either the $X$-basis or the $Y$-basis (the choice of the basis depends on input bits), and shows that creating the resulting output distribution classically requires coordinating the outcomes of qubits located at distance $\Omega(n)$ on the cycle. This result can actually easily be adapted to show that any classical circuit with one-dimensional nearest-neighbor architecture and bounded-fanin gates requires logarithmic depth to create this distribution, since otherwise distant wires cannot interact. Since a graph state over a cycle (and more generally over any constant-degree graph) can be created using a shallow quantum circuit, this already gives an unconditional separation between the computational power of quantum shallow circuits and the computational power of this restricted class of small-depth classical circuits.

The main contribution of \cite{Bravyi+17,Bravyi+18} is to show how to get a similar separation without restricting the topology of the classical circuit (other than its depth, naturally). A first important observation is that while %without restriction on the topology 
interactions can now naturally occur between distant wires, any sublogarithmic-depth bounded-fanin classical circuit~$C$ cannot create interactions between all pairs of wires. Ref.~\cite{Bravyi+17} showed that it is then always possible to find a large subset of wires $S_C$ that are connected as a long cycle and in which distant wires do not interact. The key idea is then to consider a computational problem (called 2D Hidden Linear Function) where the input is divided in two parts: one part specifies the basis in which the qubits of the graph state are measured and the second part the topology of the graph state. By using the second part of the input to force the graph state to use only nodes corresponding to wires in $S_C$, the same argument as in \cite{Barrett+PRA07} can be again applied on the cycle defined by $S_C$ to conclude that the sublogarithmic-depth classical circuit $C$ cannot output a valid output with high probability. 
%Observe that this adversary argument only leads to worst-case hardness since the input used in the argument crucially depends on $C$.

\paragraph{Our approach.}
Let us now describe the main ideas of our approach to prove Theorem \ref{th:main}. Our main technical tool, described in Section~\ref{sec:edge-graph} is a generalization of the construction from \cite{Barrett+PRA07}: we show how to generate useful quantum correlations not only from a cycle but also from any undirected graph~$G$. The key insight is to consider what we call the \emph{extended graph of} $G$, denoted $\overline G$, which is obtained by adding a vertex on each edge of~$G$. We show that when measuring the qubits of the graph state corresponding to $\overline{G}$ in either the $X$-basis or the~$Y$-basis, we get probability distributions that satisfy global conditions related to properties of subgraphs (in particular paths and cycles) of $\overline{G}$. The conditions are described in Theorems~\ref{th:cond1} and~\ref{th:cond2}.  

In order to prove our separations we consider a $d^3\times d^3$ square grid in which one vertex (called a control vertex) is placed at the center of each $1\times 1$ square of the grid (and connected by 4 edges to the~4 corners of the square), and then adding one vertex on each edge. The final graph is denoted $\overline{\Gg}_d$. The construction is described in Section \ref{sec:construction}. Note that by construction~$\overline{\Gg}_d$ is an extended graph. This means that the probability distributions arising when measuring the qubits of the graph state associated with this graph, which we denote $\ket{\overline{\Gg}_d}$, can be described by Theorems \ref{th:cond1} and~\ref{th:cond2}.

We can now describe the computational problem that we consider to show our separation. Let $m$ denote the number of control vertices in $\overline{\Gg}_d$ and $n$ denote the total number of vertices. Observe that $m=\Theta(d^6)$ and $n=\Theta(d^6)$. Given as input a string of bits $x\in \{0,1\}^m$, we consider the following process: measure each qubit of the quantum state $\ket{\overline{\Gg}_d}$ in the $X$-basis except the qubits corresponding to the control vertices, which are measured either in the $X$-basis or in the $Y$-basis depending on the value of $x$. The relation $R$ considered to prove Theorem \ref{th:main} simply asks, given $x\in\{0,1\}^m$ as input, to compute any sequence of measurement outcomes $z\in\{0,1\}^n$ that has non-zero probability of being obtained by this process. %The probability distribution $\rho_d$ considered to prove Theorem \ref{th:main-distribution} is the probability distribution corresponding to the outcome of the process when $x$ is chosen uniformly at random.
Note that this problem can be solved by shallow quantum circuits: the graph $\overline{\Gg}_d$ has constant degree and thus the graph state $\ket{\overline{\Gg}_d}$ can be constructed in constant depth. 

In Section \ref{sec:proof} we first show that for any sublogarithmic-depth bounded-fanin classical circuit $C$ there exists a subset $S_C$ of wires that are connected as a long cycle and in which distant wires do not interact. The proof of this claim is similar to what was done in \cite{Bravyi+17,Bravyi+18}.
%, but we need a few additional crucial properties that were not necessary in \cite{Bravyi+17} (for instance, Condition~(i) in Proposition \ref{prop1}). 
We then show that this claim, along with Theorems \ref{th:cond1} and \ref{th:cond2}, are enough to prove that the sublogarithmic-depth classical circuit $C$ cannot output a valid output with high probability. The key point of our argument -- and the reason why our result holds for average-case hardness on the whole set $\{0,1\}^m$ of possible inputs and not only for worst-case hardness or average-case hardness on a restricted set of inputs -- is that we do not need to construct the graph state corresponding to the subgraph induced by~$S_C$, i.e., we do not need to adapt the topology of the measured graph state to the circuit. Theorems \ref{th:cond1} and \ref{th:cond2} guarantee that we can instead work with the graph state~$\ket{\overline{\Gg}_d}$ corresponding to the whole graph and simply look at the relevant part of the probability distribution (the part corresponding to the wires in $S_C$). %enables us to derive sufficient conditions from the fact that $\overline{\Gg}_d$ contains a large cycle.

\paragraph{Related works.}
A similar result has been recently (and independently) obtained by Coudron, Stark and Vidick and expanded into a framework for robust randomness expansion \cite{Coudron+18}. The proof techniques are nevertheless different: \cite{Coudron+18} constructs a problem hard for small-depth classical circuits by starting with a non-local game and showing how to plant a polynomial number of copies of the game into a graph. Our approach, on the other hand, starts with a graph and shows how to create from it a quantum state exhibiting global quantum correlations that cannot be simulated by small-depth classical circuits with bounded-fanin gates.
%, via the notion of extended graph.

An even stronger result has been very recently announced: Bene Watts, Kothari, Schaeffer and Tal \cite{Bene+18} have shown that the 2D Hidden Linear Function introduced in \cite{Bravyi+17,Bravyi+18} cannot be solved on a non-negligible fraction of the inputs even by small-depth classical circuits with unbounded-fanin parity gates.

%==================================
\section{Preliminaries}\label{sec:prelim}
%==================================
\subsection{General notations and a technical lemma}
Given a Boolean function $f\colon A\to \{0,1\}$ on a finite set $A$, we write $|f|$ the number of elements $a\in A$ such that $f(a)=1$, i.e., $|f|=\sum_{a\in A} f(a)$. Similarly, for any finite binary string $x\in\{0,1\}^\ast$, we denote $|x|$ the Hamming weight of $x$, i.e., the number of non-zero bits of $x$.

All the graphs considered in this paper will be undirected. Given a graph $G=(V,E)$ and any vertex $u\in V$, we denote 
\[
\Nn(u)=\{v\in V\:|\:\{u,v\}\in E\}
\]
the set of neighbors of $u$. Given a path $p$ in the graph $G$ we will often be mainly interested only in the set of vertices on the path. For a vertex $v\in V$, we will thus use the convenient notation $v\in p$ to express the fact that $v$ is on the path $p$. 

The notation $\oplus$ will denote the addition modulo 2 (i.e., the bit parity). We will use the following lemma, which was first implicitly mentioned in \cite{Barrett+PRA07}, and stated formally  (but in a form slightly different from the form we present below) in \cite{Bravyi+17,Bravyi+18}. For completeness we include a proof.
\begin{lemma}{(\cite{Barrett+PRA07,Bravyi+17,Bravyi+18})}\label{lemma:affine}
Consider any affine function $q\colon \{0,1\}^3\to  \{0,1\}$ and any three affine functions $q_1\colon \{0,1\}^2\to \{0,1\}$, $q_2\colon \{0,1\}^2\to \{0,1\}$, $q_3\colon \{0,1\}^2\to \{0,1\}$ such that 
\begin{equation}
q_{1}(b_2,b_3)\oplus q_{2}(b_1,b_3)\oplus q_{3}(b_1,b_2)=0
\label{affine-eq1}
\end{equation}
holds for any $(b_1,b_2,b_3)\in \{0,1\}^3$. Then at least one of the four following equalities does not hold:
\begin{eqnarray}
q(0,0,0) &= 0,\label{affine-eq2}\\
q(0,1,1) \oplus q_1(1,1) &= 1,\label{affine-eq3}\\
q(1,0,1) \oplus q_2(1,1) &= 1,\label{affine-eq4}\\
q(1,1,0) \oplus q_3(1,1) &= 1.\label{affine-eq5}
\end{eqnarray}
\end{lemma}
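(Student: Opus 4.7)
The plan is to proceed by contradiction: assume all four equalities (\ref{affine-eq2})--(\ref{affine-eq5}) hold simultaneously, XOR them, and derive $0=1$. The right-hand sides XOR to $0\oplus 1\oplus 1\oplus 1=1$, so the task reduces to showing that the left-hand sides XOR to $0$.

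First I would split the combined left-hand side into two groups: the four evaluations of $q$, namely
\[
A \;=\; q(0,0,0)\oplus q(0,1,1)\oplus q(1,0,1)\oplus q(1,1,0),
\]
and the three evaluations of the $q_i$'s,
\[
B \;=\; q_1(1,1)\oplus q_2(1,1)\oplus q_3(1,1).
\]
I will argue $A=0$ and $B=0$ separately. For $A$, the key observation is that the four points $(0,0,0),(0,1,1),(1,0,1),(1,1,0)$ are exactly the even-weight strings in $\{0,1\}^3$ and form a $2$-dimensional affine (in fact linear) subspace of $\mathbb{F}_2^3$. Writing $q(b_1,b_2,b_3)=c\oplus c_1 b_1\oplus c_2 b_2\oplus c_3 b_3$, each bit position is $1$ in exactly two of the four points, so every coefficient $c_i$ contributes twice (hence $0$ mod $2$) and the constant $c$ contributes four times (hence $0$ mod $2$). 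Thus $A=0$. For $B$, I will simply specialize the hypothesis (\ref{affine-eq1}) to $(b_1,b_2,b_3)=(1,1,1)$, which yields $q_1(1,1)\oplus q_2(1,1)\oplus q_3(1,1)=0$, i.e.\ $B=0$.

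Combining $A=B=0$ gives XOR of the LHS equal to $0$, contradicting the XOR of the RHS equal to $1$. Hence at least one of (\ref{affine-eq2})--(\ref{affine-eq5}) must fail.

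I expect no real obstacle: the only slightly non-obvious step is recognizing the four evaluation points of $q$ as a $2$-dimensional $\mathbb{F}_2$-subspace on which any affine function sums to zero, and matching it with the identity $B=0$ extracted from (\ref{affine-eq1}) at the all-ones input. The affineness of the $q_i$'s is actually not needed beyond this single evaluation, which is a clean way to keep the argument short; the affineness of $q$, however, is essential for $A=0$ and cannot be dropped.
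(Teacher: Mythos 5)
Your proof is correct, and it takes a genuinely cleaner route than the paper's. The paper begins by writing out explicit parametrizations $q = \alpha_0\oplus\alpha_1 b_1\oplus\alpha_2 b_2\oplus\alpha_3 b_3$, $q_1 = \beta_0\oplus\beta_2 b_2\oplus\beta_3 b_3$, etc., where the coefficients of $q_2$ and $q_3$ are already constrained by (\ref{affine-eq1}) (e.g., $q_2$ shares $\beta_3$ with $q_1$, and $q_3$'s constant is forced to be $\beta_0\oplus\gamma_0$). It then uses (\ref{affine-eq2}) to set $\alpha_0 = 0$ and computes a quantity $\lambda = q(1,1,0)\oplus q_1(1,1)\oplus q(0,1,1)\oplus q_2(1,1)\oplus q(1,0,1)\oplus q_3(1,1)$ two ways, getting $0$ from the parametrization and $1$ from (\ref{affine-eq3})--(\ref{affine-eq5}). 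Your version XORs all four equations at once and splits the left side into $A$ (four evaluations of $q$) and $B$ (three evaluations of the $q_i$'s), which is essentially the same arithmetic but organized so that each half has a one-line conceptual justification: $A=0$ because any affine map on $\mathbb{F}_2^3$ sums to zero over the $2$-dimensional affine subspace of even-weight points, and $B=0$ by plugging $(1,1,1)$ into (\ref{affine-eq1}). What your packaging buys is (a) no need to carry around the coefficient constraints that (\ref{affine-eq1}) imposes on $q_1,q_2,q_3$ -- you invoke the hypothesis at a single point -- and (b) the observation, which the paper's presentation obscures, that affineness of $q_1,q_2,q_3$ is not used at all; only the affineness of $q$ and the single instance of (\ref{affine-eq1}) at the all-ones input are needed. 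That is a mild but genuine strengthening of the stated hypotheses.
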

\begin{proof}
Consider any affine function $q\colon\{0,1\}^3\to \{0,1\}$ and any three affine functions $q_1, q_2, q_3\colon\{0,1\}^2\to \{0,1\}$ satisfying Condition
 (\ref{affine-eq1}) for all $(b_1,b_2,b_3)\in \{0,1\}^3$. These four functions can be written as 
% --(\ref{EQ4}). Equations $(\ref{EQ1})$ and (\ref{EQ2}) imply that these four functions can be written as
\begin{eqnarray}
q(b_1,b_2,b_3) &=&  \alpha_0\oplus \alpha_1 b_1 \oplus \alpha_2 b_2 \oplus \alpha_3 b_3,\label{affine-eqq2}\\
q_1(b_2,b_3)   &=& \beta_0 \oplus \beta_2 b_2 \oplus \beta_3 b_3,\label{affine-eqq3}\\
q_2(b_1,b_3)  &=& \gamma_0 \oplus \gamma_1 b_1 \oplus  \beta_3 b_3,\label{affine-eqq4}\\
q_3(b_1,b_2)  &=& (\beta_0 \oplus \gamma_0) \oplus \gamma_1 b_1 \oplus \beta_2 b_2.\label{affine-eqq5}
\end{eqnarray}
for some coefficients $\alpha_{0}, \alpha_1,\alpha_2,\alpha_3,\beta_0,\beta_2,\beta_3,\gamma_0,\gamma_1\in\{0,1\}$. Assume that these functions satisfy all the four equations (\ref{affine-eq2})-(\ref{affine-eq5}). Equation (\ref{affine-eq2}) implies that $\alpha_0=0$. Consider the quantity
\[
\lambda=q(1,1,0) \oplus q_1(1,1)\oplus q(0,1,1) \oplus q_2(1,1) \oplus q(1,0,1) \oplus q_3(1,1). 
\]
Computing this quantity using the four equations (\ref{affine-eqq2})-(\ref{affine-eqq5}) gives $\lambda=3\alpha_0=0$. On the other hand, computing $\lambda$ using the three equations (\ref{affine-eq3})-(\ref{affine-eq5}) gives $\lambda=1\oplus 1\oplus 1 =1$, which leads to a contradiction and implies that the four equations (\ref{affine-eq2})-(\ref{affine-eq5}) cannot hold simultaneously.
\end{proof}

\subsection{Quantum computation: graph states and their measurements}\label{sec:prelim-q}
{\bf Quantum gates.}
We assume that the reader is familiar with the basics of quantum computation and refer to~\cite{Nielsen+00} for a standard reference. We will use the Hadamard gate $H$ and the Pauli $X$, $Y$ and $Z$ gates:
\[
H=
\frac{1}{\sqrt{2}}\left(
\begin{array}{cc}
1&1\\
1&-1
\end{array}
\right),
\hspace{2mm}
%S=\left(
%\begin{array}{cc}
%1&0\\
%0&i
%\end{array}
%\right),
\hspace{2mm}
X=\left(
\begin{array}{cc}
0&1\\
1&0
\end{array}
\right),
\hspace{2mm}
Y=\left(
\begin{array}{cc}
0&-i\\
i&0
\end{array}
\right),
\hspace{2mm}
Z=\left(
\begin{array}{cc}
1&0\\
0&-1
\end{array}
\right),
\]
% all acting on one qubit: 
where $i$ denotes the imaginary unit of complex numbers. Note that $XZ=-ZX=-iY$. We will use two kinds of measurements: measurements in the $X$-basis and measurements in the $Y$-basis, which correspond to projective measurements with observables $X$ and $Y$, respectively.
%to the operator $X$ with eigenvectors $\frac{\ket{0}+\ket{1}}{\sqrt{2}}$ and $\frac{\ket{0}-\ket{1}}{\sqrt{2}}$. 
Concretely, a measurement in the $X$-basis is realized by applying a Hadamard gate to this qubit and then measuring it in the computational basis $\{\ket{0},\ket{1}\}$. A measurement in the $Y$-basis is realized by applying the gate 
\[
\frac{1}{\sqrt{2}}\left(
\begin{array}{cc}
1&-i\\
1&i
\end{array}
\right)
\]
to this qubit and then measuring it in the computational basis.\footnote{The outcome of a measurement in the $X$-basis or the $Y$-basis is often defined as an element in $\{-1,1\}$, i.e., the outcome corresponds to one of two eigenvalues of the observables $X$ and $Y$. In our description the measurement outcome is a bit (the two bits $0$ and $1$ correspond to the two eigenvalues $1$ and $-1$, respectively), which will be more convenient to describe our results.} 
\vspace{2mm}

\noindent
{\bf Graph states.}
Graph states are quantum states that can be described using graphs \cite{Hein+PRA04}. Let $G=(V,E)$ be any undirected graph. The graph state associated with $G$ is the quantum state on $|V|$ qubits obtained by first constructing the state
\[
\bigotimes_{u\in V}\ket{0}_{\mathsf{Q}_u},
\]
where each $\mathsf{Q}_u$ represents a 1-qubit register, 
then applying a Hadamard gate on each register and, finally, applying a Controlled-Z gate on $(\mathsf{Q}_u,\mathsf{Q}_v)$ for any pair $\{u,v\}\in E$. We will write $\ket{G}$ the graph state associated with $G$.

Graph states can equivalently be defined using the stabilizer formalism. For each vertex $u\in V$ define the operator
\[
\pi_u = X_u\otimes \bigotimes_{v\in \Nn(u)}Z_v,
\]
where we use $X_u$ to denote the Pauli operator $X$ applied to Register $\mathsf{Q}_u$ and use $Z_v$ to denote the Pauli operator $Z$ applied to Register $\mathsf{Q}_v$.
%Let $S_G$ denote the group generated by these $|V|$ operator.
Observe that all these operators commute, and 
\[
\pi_u \ket{G}=\ket{G}
\]
for each $u\in G$.
The graph state $\ket{G}$ is thus the simultaneous eigenstate,  associated with the eigenvalue~1, of all these operators . 

\vspace{2mm}

\noindent
{\bf Measurements of graph states.}
The description of graph states using the stabilizer formalism is especially convenient to derive the properties of measurements we describe below (we refer to \cite{Nielsen+00} for details of the general discussion of measurements of stabilizer states and state below only the properties we will use in this paper).

Consider the graph state $\ket{G}$ of a graph $G=(V,E)$. Let $U_X,U_Y\subseteq V$ be any two disjoint subsets of vertices. Assume that we measure Register $\mathsf{Q}_u$, for each vertex $u\in U_X$, in the $X$-basis and measure Register $\mathsf{Q}_v$, for each vertex $v\in U_Y$, in the $Y$-basis. The observable corresponding to this measurement is 
\[
M =\prod_{u\in U_X} X_u\prod_{v\in U_Y} Y_v.
\]
%be the operator corresponding to the measurement. 
For each $u\in U_X\cup U_Y$, let $z_u\in\{0,1\}$ denote the random variable corresponding to the measurement outcome of the measurement performed on Register $\mathsf{Q}_u$.
Let us denote
\[
z=\bigoplus_{u\in U_X\cup U_Y} z_u
\]
the random variable corresponding to the parity of all the measurement outcomes.
Using the stabilizer formalism it is easy to show that the value of this random variable is as follows:
\begin{itemize}
\item
if $M$ can be written as $M=\prod_{u\in S}\pi_u$ for some set $S\subseteq V$ then $z=0$ with probability 1;
\item 
if $M$ can be written as $M=-\prod_{u\in S}\pi_u$ for some set $S\subseteq V$ then $z=1$ with probability 1;
\item 
if $M$ cannot be written as $M=\prod_{u\in S}\pi_u$ or $M=-\prod_{u\in S}\pi_u$ for some set $S\subseteq V$ then $z=0$ with probability $1/2$ and $z=1$ with probability $1/2$. 
\end{itemize}

%=============================================
\section{Extended Graphs and their Graph States}\label{sec:edge-graph}
%=============================================
In this section we describe the general construction on which our results are based.

For any undirected graph $G=(V,E)$, let $\overline{G}$ denote the graph with $|V|+|E|$ vertices and $2|E|$ edges obtained from $G$ by inserting a vertex at the middle of each edge of $G$. We call $\overline{G}$ the \emph{extended graph of}~$G$. We will write $V^\ast$ the set of inserted vertices and consider $\overline{G}$ as a graph over the vertex set $V\cup V^\ast$.
We refer to Figure \ref{fig1} for an illustration. 
% since each edge of~$G$ is decomposed into two edges. 

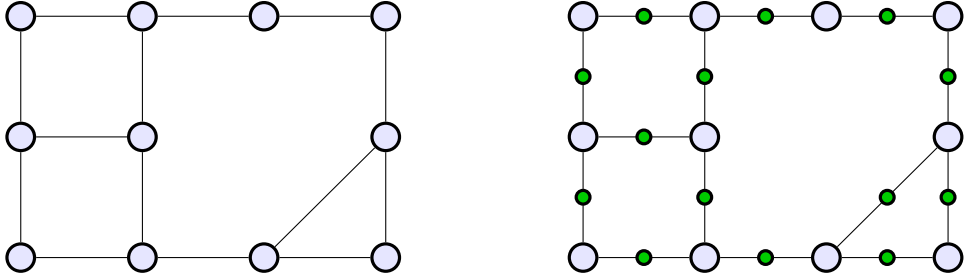
\begin{figure}[th]
\vspace{3mm}
\centering
\begin{tikzpicture}[scale=0.4,rectnode/.style={shape=circle,draw=black,minimum height=20mm},roundnode/.style={circle, draw=black!40, fill=blue!10, very thick, minimum size=2mm}]
\newcommand\XA{4}
\newcommand\YA{4}
    
    \node[roundnode,draw=black,minimum height=3mm] (a0) at (0*\XA,0*\YA) {};    
    \node[roundnode,draw=black,minimum height=3mm] (a1) at (1*\XA,0*\YA) {}; 
    \node[roundnode,draw=black,minimum height=3mm] (a2) at (2*\XA,0*\YA) {}; 
    \node[roundnode,draw=black,minimum height=3mm] (a3) at (3*\XA,0*\YA) {}; 
    \node[roundnode,draw=black,minimum height=3mm] (b0) at (0*\XA,1*\YA) {};    
    \node[roundnode,draw=black,minimum height=3mm] (b1) at (1*\XA,1*\YA) {}; 
 %   \node[roundnode,draw=black,minimum height=3mm] (b2) at (2*\XA,1*\YA) {}; 
    \node[roundnode,draw=black,minimum height=3mm] (b3) at (3*\XA,1*\YA) {}; 
    \node[roundnode,draw=black,minimum height=3mm] (c0) at (0*\XA,2*\YA) {};    
    \node[roundnode,draw=black,minimum height=3mm] (c1) at (1*\XA,2*\YA) {}; 
    \node[roundnode,draw=black,minimum height=3mm] (c2) at (2*\XA,2*\YA) {}; 
    \node[roundnode,draw=black,minimum height=3mm] (c3) at (3*\XA,2*\YA) {}; 
    \path[] 
     (a0) edge (a1)
     (a1) edge (a2)
     (a2) edge (a3)
     (b0) edge (b1)
%     (b1) edge (b2)
%     (b2) edge (b3)
     (c0) edge (c1)
     (c1) edge (c2)
     (c2) edge (c3)
     (a0) edge (b0)
     (b0) edge (c0)
     (a1) edge (b1)
     (b1) edge (c1)
%  (a2) edge (b2)
%   (b2) edge (c2)
     (a3) edge (b3)
     (b3) edge (c3)
     (a2) edge (b3);
\end{tikzpicture}
\hspace{20mm}
\begin{tikzpicture}[scale=0.4,rectnode/.style={shape=circle,draw=black,minimum height=20mm},roundnode/.style={circle, draw=black!40, fill=blue!10, very thick, minimum size=2mm},roundnode3/.style={circle, draw=black!100, fill=black!20!green, very thick, minimum size=1mm,scale=0.5}]
\newcommand\XA{4}
\newcommand\YA{4}
    
    \node[roundnode,draw=black,minimum height=3mm] (a0) at (0*\XA,0*\YA) {};    
    \node[roundnode,draw=black,minimum height=3mm] (a1) at (1*\XA,0*\YA) {}; 
    \node[roundnode,draw=black,minimum height=3mm] (a2) at (2*\XA,0*\YA) {}; 
    \node[roundnode,draw=black,minimum height=3mm] (a3) at (3*\XA,0*\YA) {}; 
    \node[roundnode,draw=black,minimum height=3mm] (b0) at (0*\XA,1*\YA) {};    
    \node[roundnode,draw=black,minimum height=3mm] (b1) at (1*\XA,1*\YA) {}; 
 %   \node[roundnode,draw=black,minimum height=3mm] (b2) at (2*\XA,1*\YA) {}; 
    \node[roundnode,draw=black,minimum height=3mm] (b3) at (3*\XA,1*\YA) {}; 
    \node[roundnode,draw=black,minimum height=3mm] (c0) at (0*\XA,2*\YA) {};    
    \node[roundnode,draw=black,minimum height=3mm] (c1) at (1*\XA,2*\YA) {}; 
    \node[roundnode,draw=black,minimum height=3mm] (c2) at (2*\XA,2*\YA) {}; 
    \node[roundnode,draw=black,minimum height=3mm] (c3) at (3*\XA,2*\YA) {}; 
    \node[roundnode3,draw=black,minimum height=3mm] (A0) at (0.5*\XA,0*\YA) {};   
    \node[roundnode3,draw=black,minimum height=3mm] (A1) at (1.5*\XA,0*\YA) {}; 
    \node[roundnode3,draw=black,minimum height=3mm] (A2) at (2.5*\XA,0*\YA) {}; 
    \node[roundnode3,draw=black,minimum height=3mm] (B0) at (0.5*\XA,1*\YA) {};   
    \node[roundnode3,draw=black,minimum height=3mm] (B1) at (2.5*\XA,0.5*\YA) {}; 
 %   \node[roundnode3,draw=black,minimum height=3mm] (B2) at (2.5*\XA,1*\YA) {}; 
    \node[roundnode3,draw=black,minimum height=3mm] (C0) at (0.5*\XA,2*\YA) {};   
    \node[roundnode3,draw=black,minimum height=3mm] (C1) at (1.5*\XA,2*\YA) {}; 
    \node[roundnode3,draw=black,minimum height=3mm] (C2) at (2.5*\XA,2*\YA) {}; 
    \node[roundnode3,draw=black,minimum height=3mm] (D0) at (0*\XA,0.5*\YA) {}; 
    \node[roundnode3,draw=black,minimum height=3mm] (E0) at (0*\XA,1.5*\YA) {}; 
    \node[roundnode3,draw=black,minimum height=3mm] (F0) at (1*\XA,0.5*\YA) {}; 
    \node[roundnode3,draw=black,minimum height=3mm] (G0) at (1*\XA,1.5*\YA) {}; 
 %   \node[roundnode3,draw=black,minimum height=3mm] (H0) at (2*\XA,0.5*\YA) {}; 
 %   \node[roundnode3,draw=black,minimum height=3mm] (I0) at (2*\XA,1.5*\YA) {}; 
    \node[roundnode3,draw=black,minimum height=3mm] (J0) at (3*\XA,0.5*\YA) {}; 
    \node[roundnode3,draw=black,minimum height=3mm] (K0) at (3*\XA,1.5*\YA) {}; 
    \path[] 
     (a0) edge (A0)
     (A0) edge (a1)
     (a1) edge (A1)
     (A1) edge (a2)
     (a2) edge (A2)
     (A2) edge (a3)
     (b0) edge (B0)
     (B0) edge (b1)
%     (b1) edge (B1)
%     (B1) edge (b2)
%     (b2) edge (B2)
 %    (B2) edge (b3)
     (c0) edge (C0)
     (C0) edge (c1)
     (c1) edge (C1)
     (C1) edge (c2)
     (c2) edge (C2)
     (C2) edge (c3)
     (a0) edge (D0)
     (D0) edge (b0)
     (b0) edge (E0)
     (E0) edge (c0)
	 (a1) edge (F0)
     (F0) edge (b1)
     (b1) edge (G0)
     (G0) edge (c1)
 %    (a2) edge (H0)
%     (H0) edge (b2)
     (B1) edge (b3)
     (B1) edge (a2)
     (a3) edge (J0)
     (J0) edge (b3)
     (b3) edge (K0)
     (K0) edge (c3);
\end{tikzpicture}
\caption{Example for our construction. The graph $G=(V,E)$ is represented on the left. The extended graph $\overline{G}$ is represented on the right. In this figure the large circles represent the vertices in $V$, while the small circles represent the vertices in $V^\ast$.}\label{fig1}
\end{figure}

We now define the concept of \emph{$f$-covering} of a graph. 
\begin{definition}
Let $G=(V,E)$ be an undirected graph and $f\colon V\to \{0,1\}$ be any function such that $|f|$ is even. An $f$-covering of $\overline{G}$ is a set of $|f|/2$ paths 
%$p_1,\ldots,p_{|b/2|}$ 
of $\overline{G}$ such that each vertex in $\{v\in V\:|\: f(v)=1\}$ appears once as an endpoint of one of these paths.
\end{definition}
We refer to Figure \ref{fig1b} for an illustration. 
Note that the $|f|/2$ paths of an $f$-covering do not need to be edge-disjoint.

\begin{figure}[th]
\vspace{3mm}
\centering
\begin{tikzpicture}[scale=0.3,rectnode/.style={shape=circle,draw=black,minimum height=20mm},roundnode/.style={circle, draw=black!40, fill=blue!10, very thick, minimum size=2mm,scale=0.7},roundnode3/.style={circle, draw=black!100, fill=black!20!green, very thick, minimum size=1mm,scale=0.5}]
\newcommand\XA{6}
\newcommand\YA{4}
    
    \node[roundnode,draw=black,minimum height=3mm] (a0) at (0*\XA,0*\YA) {$u_8$};    
    \node[roundnode,draw=black,minimum height=3mm] (a1) at (1*\XA,0*\YA) {$u_9$}; 
    \node[roundnode,draw=black,minimum height=3mm] (a2) at (2*\XA,0*\YA) {\small $u_{10}$}; 
    \node[roundnode,draw=black,minimum height=3mm] (a3) at (3*\XA,0*\YA) {\small $u_{11}$}; 
    \node[roundnode,draw=black,minimum height=3mm] (b0) at (0*\XA,1*\YA) {$u_5$};    
    \node[roundnode,draw=black,minimum height=3mm] (b1) at (1*\XA,1*\YA) {$u_6$}; 
 %   \node[roundnode,draw=black,minimum height=3mm] (b2) at (2*\XA,1*\YA) {}; 
    \node[roundnode,draw=black,minimum height=3mm] (b3) at (3*\XA,1*\YA) {$u_7$}; 
    \node[roundnode,draw=black,minimum height=3mm] (c0) at (0*\XA,2*\YA) {$u_1$};    
    \node[roundnode,draw=black,minimum height=3mm] (c1) at (1*\XA,2*\YA) {$u_2$}; 
    \node[roundnode,draw=black,minimum height=3mm] (c2) at (2*\XA,2*\YA) {$u_{3}$}; 
    \node[roundnode,draw=black,minimum height=3mm] (c3) at (3*\XA,2*\YA) {$u_{4}$}; 
    \node[roundnode3,draw=black,minimum height=3mm] (A0) at (0.5*\XA,0*\YA) {};   
    \node[roundnode3,draw=black,minimum height=3mm] (A1) at (1.5*\XA,0*\YA) {}; 
    \node[roundnode3,draw=black,minimum height=3mm] (A2) at (2.5*\XA,0*\YA) {}; 
    \node[roundnode3,draw=black,minimum height=3mm] (B0) at (0.5*\XA,1*\YA) {};   
    \node[roundnode3,draw=black,minimum height=3mm] (B1) at (2.5*\XA,0.5*\YA) {}; 
 %   \node[roundnode3,draw=black,minimum height=3mm] (B2) at (2.5*\XA,1*\YA) {}; 
    \node[roundnode3,draw=black,minimum height=3mm] (C0) at (0.5*\XA,2*\YA) {};   
    \node[roundnode3,draw=black,minimum height=3mm] (C1) at (1.5*\XA,2*\YA) {}; 
    \node[roundnode3,draw=black,minimum height=3mm] (C2) at (2.5*\XA,2*\YA) {}; 
    \node[roundnode3,draw=black,minimum height=3mm] (D0) at (0*\XA,0.5*\YA) {}; 
    \node[roundnode3,draw=black,minimum height=3mm] (E0) at (0*\XA,1.5*\YA) {}; 
    \node[roundnode3,draw=black,minimum height=3mm] (F0) at (1*\XA,0.5*\YA) {}; 
    \node[roundnode3,draw=black,minimum height=3mm] (G0) at (1*\XA,1.5*\YA) {}; 
 %   \node[roundnode3,draw=black,minimum height=3mm] (H0) at (2*\XA,0.5*\YA) {}; 
 %   \node[roundnode3,draw=black,minimum height=3mm] (I0) at (2*\XA,1.5*\YA) {}; 
    \node[roundnode3,draw=black,minimum height=3mm] (J0) at (3*\XA,0.5*\YA) {}; 
    \node[roundnode3,draw=black,minimum height=3mm] (K0) at (3*\XA,1.5*\YA) {}; 
    \path[] 
     (a0) edge (A0)
     (A0) edge (a1)
     (a1) edge (A1)
     (A1) edge (a2)
     (a2) edge (A2)
     (A2) edge (a3)
     (b0) edge (B0)
     (B0) edge (b1)
%     (b1) edge (B1)
%     (B1) edge (b2)
%     (b2) edge (B2)
 %    (B2) edge (b3)
     (c0) edge (C0)
     (C0) edge (c1)
     (c1) edge (C1)
     (C1) edge (c2)
     (c2) edge (C2)
     (C2) edge (c3)
     (a0) edge (D0)
     (D0) edge (b0)
     (b0) edge (E0)
     (E0) edge (c0)
	 (a1) edge (F0)
     (F0) edge (b1)
     (b1) edge (G0)
     (G0) edge (c1)
 %    (a2) edge (H0)
%     (H0) edge (b2)
     (B1) edge (b3)
     (B1) edge (a2)
     (a3) edge (J0)
     (J0) edge (b3)
     (b3) edge (K0)
     (K0) edge (c3);
     \draw[very thick,red] (b0) edge (E0);
     \draw[very thick,red] (E0) edge (c0);
     \draw[very thick,red] (c0) edge (C0);
     \draw[very thick,red] (C0) edge (c1);
     \draw[very thick,red] (c3) edge (K0);
     \draw[very thick,red] (K0) edge (b3);
     \draw[very thick,red] (b3) edge (B1);
     \draw[very thick,red] (B1) edge (a2);
\end{tikzpicture}
\caption{Illustration of the concept of $f$-covering. Here $V=\{u_1,\ldots,u_{11}\}$ and $f\colon V\to\{0,1\}$ is defined as follows:  $f(u_2)=f(u_4)=f(u_5)=f(u_{10})=1$ and $f(u_1)=f(u_3)=f(u_6)=f(u_7)=f(u_8)=f(u_9)=f(u_{11})=0$. The two paths depicted in red form an $f$-covering.}\label{fig1b}
\end{figure}
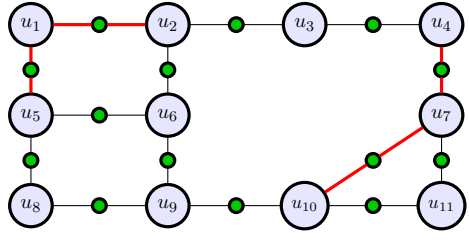

\begin{figure}[th]
\begin{center}
\fbox{
\begin{minipage}{13 cm} 
%\vspace{2mm}
\begin{itemize}
\item[1.]
Construct the graph state over $\overline{G}$.
\item[2.]
For each $v\in V$ such that $f(v)=1$, measure the qubit of the node $v$ in the $Y$-basis.
Measure the qubits of all the other nodes of $\overline{G}$ in the $X$-basis.
%For each $v\in V$ such that $f(v)=0$, measure the qubit of node $v$ in the $X$-basis.  For each $v\in V^\ast$, measure the qubit of node $v$ in the $X$-basis. 
\end{itemize}
%\vspace{1mm}
\end{minipage}
}
\end{center}\vspace{-4mm}
\caption{The process $\Pp(G,f)$.}\label{fig:process}
\end{figure}

Given a graph $G=(V,E)$ and a function $f\colon V\to\{0,1\}$, consider the process $\Pp(G,f)$ described in Figure~\ref{fig:process}. For any vertex $v\in V\cup V^\ast$, let $z_v$ denote the random variable corresponding to the outcome of the measurement performed on the qubit of node $v$.
The following two theorems describe the correlations among these random variables.
\begin{theorem}\label{th:cond1}
For any cycle $\Cc$ of $\overline{G}$ the following equality holds with probability 1:
\begin{equation}\label{eq:cond-cycle}
\bigoplus_{v\in \Cc\cap V^\ast} z_{v} = 
0.
\end{equation}
\end{theorem}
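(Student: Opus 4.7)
The plan is to apply the stabilizer criterion recalled in Section~\ref{sec:prelim-q}. Concretely, I would show that the Pauli operator
\[
O \;=\; \prod_{v \in \Cc \cap V^\ast} X_v
\]
belongs to the stabilizer group of $\ket{\overline{G}}$ with eigenvalue $+1$, i.e., $O = \prod_{u \in S} \pi_u$ for some $S \subseteq V \cup V^\ast$. Because every vertex of $V^\ast$ is measured in the $X$-basis by $\Pp(G,f)$, all measurement observables are pairwise commuting Paulis on disjoint qubits, and $O$ is precisely the product of the subset of those observables indexed by $\Cc \cap V^\ast$. Hence the corresponding product of outcomes $(-1)^{z_v}$ deterministically equals the eigenvalue of $O$ on $\ket{\overline{G}}$, which forces $\bigoplus_{v \in \Cc \cap V^\ast} z_v = 0$ regardless of the function $f$ or of the outcomes obtained on the other qubits.

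The construction of $S$ exploits the bipartite structure of $\overline{G}$. Every edge of $\overline{G}$ joins a vertex of $V$ to a vertex of $V^\ast$, since each $v \in V^\ast$ was inserted on an edge of $G$ and so has exactly two neighbors, both in $V$. Hence any cycle $\Cc$ of $\overline{G}$ alternates between $V$ and $V^\ast$. In particular, for each $v \in \Cc \cap V^\ast$ the two neighbors of $v$ in $\overline{G}$ are exactly the two cycle-neighbors of $v$, and both lie in $\Cc \cap V$. Conversely, no vertex of $\Cc \cap V^\ast$ is adjacent to any $u \in V \setminus \Cc$.

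Setting $S = \Cc \cap V^\ast$ and using $\pi_v = X_v \otimes Z_{v_1} \otimes Z_{v_2}$ for $v \in V^\ast$ (where $v_1,v_2 \in V$ are the two neighbors of $v$), I would compute
\[
\prod_{v \in S} \pi_v \;=\; \prod_{v \in S} X_v \cdot \prod_{u \in V} Z_u^{c(u)},
\]
where $c(u)$ counts the number of elements of $S$ adjacent to $u$, reduced modulo~$2$. The observations above give $c(u) = 2 \equiv 0 \pmod{2}$ for $u \in \Cc \cap V$ (the two cycle-neighbors of $u$ both lie in $S$) and $c(u) = 0$ for $u \in V \setminus \Cc$, so every $Z$-factor disappears and $\prod_{v \in S} \pi_v = O$. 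This completes the reduction to the stabilizer criterion.

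There is no serious obstacle to this plan; the argument is a direct stabilizer computation once the right choice of $S$ is identified. The only subtlety worth articulating is why the many other measurements performed by $\Pp(G,f)$ do not disturb the parity on $\Cc \cap V^\ast$, which follows from the fact that the measurement observables are pairwise-disjoint Paulis and hence commute, so that the parity of any subset of outcomes is determined solely by the eigenvalue of the corresponding product observable on $\ket{\overline{G}}$.
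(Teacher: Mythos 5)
Your proof is correct and follows essentially the same route as the paper: both identify $S=\Cc\cap V^\ast$, observe that $\prod_{v\in S}\pi_v$ collapses to $\prod_{v\in S}X_v$, and invoke the stabilizer criterion. The paper states the cancellation of the $Z$-factors in one line (``Since $\Cc$ is a cycle we have\ldots''); you supply the underlying reason explicitly via the bipartite alternation of $\Cc$ between $V$ and $V^\ast$ and the degree-$2$ property of $V^\ast$-vertices, together with a remark on why the commuting single-qubit measurements on the other qubits cannot affect this parity. These are exactly the details the paper leaves implicit.
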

\begin{proof}
In Process $\Pp(G,f)$ all the vertices in $\Cc\cap V^\ast$ are measured in the $X$-basis. Since $\Cc$ is a cycle we have
%which means that the measurement operator is $\prod_{v\in \Cc\cap V^\ast} X_{v}$. 
\[
\prod_{v\in \Cc\cap V^\ast} \pi_{v}=
\prod_{v\in \Cc\cap V^\ast} X_{v},
\]
which is the measurement operator corresponding to this measurement.
The discussion of Section \ref{sec:prelim-q} implies that the parity of all the measurement outcomes is always zero, as claimed.
\end{proof}

\begin{theorem}\label{th:cond2}
Assume that $|f|$ is even and let $\{p_1,\ldots,p_{|f|/2}\}$ be any $f$-covering of $\overline{G}$. Let us write
\[
z_V = \bigoplus_{v\in V} z_v.
\]
Then the following equality holds with probability 1:
\begin{equation}\label{eq:cond}
z_{V}\oplus \bigoplus_{i=1}^{|f|/2}\bigoplus_{v\in p_i\cap V^\ast} z_{v} = 
\left\{
\begin{tabular}{cl}
$0$&\textrm{ if }$|f|\bmod 4=0$,\\
$1$&\textrm{ if }$|f|\bmod 4=2$.
\end{tabular}
\right.
\end{equation}
\end{theorem}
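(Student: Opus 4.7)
The plan is to mimic the stabilizer strategy used for Theorem \ref{th:cond1}: exhibit a set $S\subseteq V\cup V^\ast$ such that the measurement operator $M$ corresponding to the parity on the left-hand side of~(\ref{eq:cond}) equals $\epsilon\prod_{u\in S}\pi_u$ for some sign $\epsilon\in\{+1,-1\}$ that depends only on $|f|\bmod 4$, and then invoke the three-case rule from Section~\ref{sec:prelim-q}. Explicitly, $M$ acts as $X$ on every $v\in V$ with $f(v)=0$, as $Y$ on every $v\in V$ with $f(v)=1$, and as $X$ on every $v\in V^\ast$ appearing in an odd number of the covering paths. A natural candidate is $S=V\cup T$, where $T\subseteq V^\ast$ is precisely the set of $V^\ast$-vertices appearing in an odd number of $p_1,\ldots,p_{|f|/2}$.

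The next step is a qubit-by-qubit computation of $\prod_{u\in V\cup T}\pi_u$, exploiting the fact that $\overline{G}$ is bipartite with parts $V$ and $V^\ast$. On a $V^\ast$-qubit $q$, the only $V$-vertices $u$ with $\pi_u$ supported on $q$ are the two $V$-neighbors of $q$, contributing $Z_q^2=I$; and $\pi_q$ itself is included iff $q\in T$, contributing $X_q$. On a $V$-qubit $v$, the operator $\pi_v$ contributes $X_v$, and each $u\in T$ adjacent to $v$ in $\overline{G}$ contributes a $Z_v$, giving $X_v Z_v^{d_T(v)}$ where $d_T(v):=|T\cap \Nn(v)|$.

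The pivotal combinatorial step is to show $d_T(v)\equiv f(v)\pmod 2$. Because $\overline{G}$ is bipartite and each $p_i$ is a simple path with both endpoints in $V$, the $V^\ast$-neighbors of $v$ lying on $p_i$ are exactly the $V^\ast$-vertices immediately adjacent to $v$ along the path; there are $1$ of them if $v$ is an endpoint of $p_i$, $2$ if $v$ is interior to $p_i$, and $0$ if $v\notin p_i$. Hence, modulo~$2$, the sum over $i$ equals the number of paths whose endpoint is $v$, which by the definition of an $f$-covering is exactly $f(v)$. Consequently $X_v Z_v^{d_T(v)}$ equals $X_v$ when $f(v)=0$ and $X_v Z_v = -iY_v$ when $f(v)=1$.

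Putting the contributions together gives $\prod_{u\in V\cup T}\pi_u=(-i)^{|f|}M$. The main (minor) subtlety is tracking this phase: since $|f|$ is even, $(-i)^{|f|}=+1$ when $|f|\equiv 0\pmod 4$ and $(-i)^{|f|}=-1$ when $|f|\equiv 2\pmod 4$. Equivalently, $M=\prod_{u\in S}\pi_u$ in the first case and $M=-\prod_{u\in S}\pi_u$ in the second. Applying the stabilizer measurement rules from Section~\ref{sec:prelim-q} yields that the parity on the left-hand side of~(\ref{eq:cond}) is deterministically~$0$ when $|f|\equiv 0\pmod 4$ and deterministically~$1$ when $|f|\equiv 2\pmod 4$, which is exactly the claim.
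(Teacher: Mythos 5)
Your proof is correct and follows essentially the same stabilizer strategy as the paper: take $S=V\cup T$ (the paper's $V\cup V_2$), show $\prod_{u\in S}\pi_u=(-1)^{|f|/2}M$, and invoke the measurement rule from Section~\ref{sec:prelim-q}. The only organizational difference is that you evaluate $\prod_{u\in S}\pi_u$ qubit-by-qubit, which forces you to isolate the combinatorial identity $d_T(v)\equiv f(v)\pmod 2$, whereas the paper evaluates it path-by-path via $\prod_{v\in p_i\cap V^\ast}\pi_v=Z_{u_i}\bigl(\prod_{v\in p_i\cap V^\ast}X_v\bigr)Z_{v_i}$ and lets the cancellations at interior $V$-vertices happen automatically; both routes give the same sign $(-i)^{|f|}=(-1)^{|f|/2}$. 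One small precision worth adding in your combinatorial step: the claim that the $V^\ast$-neighbors of $v$ lying on $p_i$ are exactly those path-adjacent to $v$ does not follow from bipartiteness and path simplicity alone -- it also relies on every $V^\ast$-vertex having degree exactly $2$ in $\overline{G}$ (an inserted vertex's only two neighbors are the endpoints of its original edge), which forces any $V^\ast$-vertex $q\in p_i$ with $q\in\mathcal{N}(v)$ to be one of the two path-neighbors of $v$; you use this degree-$2$ fact explicitly a sentence earlier when cancelling $Z_q^2$, so it only needs to be invoked again here for the argument to be fully airtight.
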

\begin{proof}
Let  $V_1=\{u_1,\ldots,u_{|f|/2},v_1,\ldots,v_{|f|/2}\}\subseteq V$ denote the set of vertices that appear as an endpoint of one of the paths. 
%Each vertex in $V^\ast$ can appear on one of the $|b|/2$ paths, on two of the $|b|/2$ or even more. 
Let $V_2\subseteq V^\ast$ denote the set of vertices in $V^\ast$ that appear 
%on at least one of the paths. Let $V_3\subset V_2$ be the set of vertices in $V^\ast$ that appear 
on an odd number of paths (remember that the paths in an $f$-covering do not need to be disjoint). Note that the equation we want to show (Equation (\ref{eq:cond})) can be rewritten as 
\begin{equation}\label{eq:cond2}
z_{V}\oplus \bigoplus_{v\in V_2} z_{v} = 
\left\{
\begin{tabular}{cl}
$0$&\textrm{ if }$|f|\bmod 4=0$,\\
$1$&\textrm{ if }$|f|\bmod 4=2$.
\end{tabular}
\right.
\end{equation}

From the definition of an $f$-covering, we have $V_1=\{v\in V\:|\: f(v)=1\}$, and thus in Process $\Pp(G,f)$ all the vertices in $V_1$ are measured in the $Y$-basis, while the vertices in $V\setminus V_1$ and the vertices in $V_2$ are measured in the $X$-basis. The observable corresponding to this measurement is thus
\begin{equation}\label{eq:stab}
\prod_{u\in V_1}Y_u
\prod_{v\in (V\setminus V_1)\cup V_2}X_v.
\end{equation}

Observe that
\[
\prod_{v\in V}\pi_v=
\prod_{v\in V}X_v.
\]
This simple but crucial property follows from our construction: $\overline{G}$ is obtained from $G$ by inserting a vertex on each edge of $G$.
For each $i\in\{1,\ldots,|f|/2\}$ we also have 
\[
\prod_{v\in p_i\cap V^\ast} \pi_v =
Z_{u_i}\left(\prod_{v\in p_i\cap V^\ast}X_v \right)
Z_{v_i}.
\]
Thus 
\begin{eqnarray*}
\left(\prod_{v\in V}\pi_v\right)\times
\left(\prod_{i=1}^{|f|/2}\prod_{v\in p_i\cap V^\ast} \pi_v\right)&=&
\left(\prod_{u\in V_1} X_{u}Z_{u}\right)\times 
\left(\prod_{v\in V\setminus V_1}X_{v}\right)\times
\left(\prod_{v\in V_2} X_{v}\right)\\
&=&(-1)^{|f|/2}\prod_{u\in V_1}Y_u
\prod_{v\in (V\setminus V_1)\cup V_2}X_v.
\end{eqnarray*}

When $|f|\bmod 4 =0$ the observable of Equation (\ref{eq:stab}) can then be written as a product of generators of the graph state, and thus the parity of all the measurement outcomes is $0$.  When $|f|\bmod 4 =2$ the additive inverse of this observable can be written as a product of generators of the graph states, and thus the parity of all the measurement outcomes is $1$. This proves Equation (\ref{eq:cond2}), and thus Equation (\ref{eq:cond}).
\end{proof}

\paragraph{Remark.}
The conditions of Equations (\ref{eq:cond-cycle}) for all the cycles $\Cc$ of $\overline{G}$ and the condition of Equation~(\ref{eq:cond}) together actually completely characterize the distribution of the outcomes of $\Pp(G,f)$: the variables $\{z_v\}_{v\in V\cup V^\ast}$ are uniformly distributed over the set of all values satisfying all these equations. Note that when $G$ is a connected graph then this corresponds to satisfying exactly $|E|-|V|+2$ independent linear equations. Indeed, $|E|-(|V|-1)$ equations suffice to guarantee that Equation (\ref{eq:cond-cycle}) holds for all the cycles $\Cc$ of $\overline{G}$, as can be seen by considering a spanning tree of $G$: the spanning tree contains $|V|-1$ edges and each of the remaining $|E|-(|V|-1)$ edges gives rise to a cycle in $G$ (and thus to a new linear equation) when added to the spanning tree. A similar characterization can be easily obtained when $G$ is not connected as well, by considering separately each connected component.

%==================================================
%==================================================
\section{Description of the Relation \texorpdfstring{$\boldsymbol{R}$}{R}}\label{sec:construction}
%==================================================
%==================================================
In this section we describe the computational problem we use to prove Theorem \ref{th:main}.

\subsection{Our graph construction}\label{subsec:graph}
For any even positive integer $d$, we explain how to construct two graphs $\Gg_{d}$ and $\overline{\Gg}_{d}$ that we will use to define the computational problems. The construction is illustrated in Figure \ref{fig2}.

%\iffalse
\begin{figure}[!t]
\vspace{3mm}
\centering
\begin{tikzpicture}[scale=0.45,roundnode/.style={circle, draw=black!100, fill=blue!10, very thick, minimum size=1mm,scale=0.5},roundnode2/.style={circle, draw=black!100, fill=blue!90, very thick, minimum size=1mm,scale=0.5},roundnode3/.style={circle, draw=black!100, fill=black!20!green, very thick, minimum size=1mm,scale=0.3}]
\newcommand\XA{2}
\newcommand\YA{2}

\foreach \i in {-1,3,9}{
\foreach \j in {0,4,-6}{
%\node[draw=none,fill=none,] at (7.35*\XA,0.5*\YA) {$u$};
\draw[dashed,draw=black,fill=none,thick] (\i*\XA-0.2*\XA,\j*\YA-0.2*\YA) rectangle (\i*\XA+3.2*\XA,\j*\YA+3.2*\YA) ;
}}

\foreach \i in {-2,-1,0,2,3,4,8,9,10}{
\foreach \j in {-1,0,1,3,4,5,-5,-6,-7}{
%\node[draw=none,fill=none,] at (7.35*\XA,0.5*\YA) {$u$};
%\draw[dashed,draw=black,fill=none,thick] (\i*\XA-0.2*\XA,\j*\YA-0.2*\YA) rectangle (\i*\XA+3.2*\XA,\j*\YA+3.2*\YA) ;
\draw [very thin,black] (\i*\XA+1.5*\XA,\j*\YA+1.5*\XA) -- (\i*\XA+1*\XA,\j*\YA+1*\XA){};
\draw [very thin,black] (\i*\XA+1.5*\XA,\j*\YA+1.5*\XA) -- (\i*\XA+2*\XA,\j*\YA+1*\XA){};
\draw [very thin,black] (\i*\XA+1.5*\XA,\j*\YA+1.5*\XA) -- (\i*\XA+1*\XA,\j*\YA+2*\XA){};
\draw [very thin,black] (\i*\XA+1.5*\XA,\j*\YA+1.5*\XA) -- (\i*\XA+2*\XA,\j*\YA+2*\XA){};
\node[roundnode2] (a0) at (\i*\XA+1.5*\XA,\j*\YA+1.5*\XA) {}; 
\node[roundnode3] (a0) at (\i*\XA+1.75*\XA,\j*\YA+1.75*\XA) {}; 
\node[roundnode3] (a0) at (\i*\XA+1.75*\XA,\j*\YA+1.25*\XA) {}; 
\node[roundnode3] (a0) at (\i*\XA+1.25*\XA,\j*\YA+1.25*\XA) {}; 
\node[roundnode3] (a0) at (\i*\XA+1.25*\XA,\j*\YA+1.75*\XA) {}; 
%\draw [very thick,black,->] (8.2*\XA,0.4*\YA) .. controls (9*\XA,0*\YA) .. (9*\XA,-1*\YA){};
%\node[draw=none,fill=none,] at (9.6*\XA,-1.3*\YA) {$\Bo(u)$};
}}

\draw [very thick,black,<-] (-.7*\XA,6.5*\YA) .. controls (-1.4*\XA,7.0*\YA) .. (-1.4*\XA,7.5*\YA){};
\node[draw=none,fill=none,] at (-1.5*\XA,7.7*\XA) {$u_{11}$};

\draw [very thick,black,<-] (11.7*\XA,6.5*\YA) .. controls (12.5*\XA,7.0*\YA) .. (12.4*\XA,7.5*\YA){};
\node[draw=none,fill=none,] at (12.5*\XA,7.7*\XA) {$u_{1k}$};

\draw [very thick,black,<-] (-.7*\XA,-5.5*\YA) .. controls (-1.4*\XA,-6.0*\YA) .. (-1.4*\XA,-6.5*\YA){};
\node[draw=none,fill=none,] at (-1.5*\XA,-6.7*\XA) {$u_{k1}$};

\draw [very thick,black,<-] (11.7*\XA,-5.5*\YA) .. controls (12.5*\XA,-6.0*\YA) .. (12.4*\XA,-6.5*\YA){};
\node[draw=none,fill=none,] at (12.5*\XA,-6.7*\XA) {$u_{kk}$};

\foreach \i in {-6,-5,-4,-3,0,1,2,3,4,5,6,7} {
        \draw [very thin,black] (-1*\XA,\i*\YA) -- (6.5*\XA,\i*\YA){};
}
\foreach \i in {-6,-5,-4,-3,0,1,2,3,4,5,6,7} {
        \draw [very thin,black] (8.5*\XA,\i*\YA) -- (12*\XA,\i*\YA){};
}     
     
\foreach \i in {-1,...,6,9,10,11,12} {     
        \draw [very thin,black] (\i*\XA,-0.5*\YA) -- (\i*\XA,7*\YA){};
}

\foreach \i in {-1,...,6,9,10,11,12} {     
        \draw [very thin,black] (\i*\XA,-6*\YA) -- (\i*\XA,-2.5*\YA){};
}

\foreach \i in {-1,...,6,9,10,11,12} {     
        \draw [dotted,very thick,black] (\i*\XA,-2*\YA) -- (\i*\XA,-1*\YA){};
}

\foreach \i in {-6,-5,-4,-3,0,1,2,3,4,5,6,7} {
        \draw [dotted,very thick,black](7*\XA,\i*\YA) -- (8*\XA,\i*\YA){};
}    

\draw [very thick,blue] (6.5*\XA,-6*\YA) -- (-1*\XA,-6*\YA) -- (-1*\XA,-2.5*\YA){}; 
\draw [very thick,blue] (-1*\XA,-0.5*\YA) -- (-1*\XA,7*\YA) -- (6.5*\XA,7*\YA){}; 
\draw [very thick,blue] (8.5*\XA,7*\YA) -- (12*\XA,7*\YA) -- (12*\XA,-0.5*\YA){}; 
\draw [very thick,blue] (12*\XA,-2.5*\YA) -- (12*\XA,-6*\YA) -- (8.5*\XA,-6*\YA){}; 

\foreach \i in {-1,0,1,2,3,4,5,6,9,10,11,12} {
	\foreach \j in {-6,-5,-4,-3,0,1,2,3,4,5,6,7} {
			\node[roundnode] (a0) at (\i*\XA,\j*\YA) {};   
	}
}

\foreach \i in {-0.5,0.5,1.5,2.5,3.5,4.5,5.5,9.5,10.5,11.5} {
	\foreach \j in {-6,-5,-4,-3,0,1,2,3,4,5,6,7} {
			\node[roundnode3] (a0) at (\i*\XA,\j*\YA) {};   
	}
}

\foreach \i in {-1,0,1,2,3,4,5,6,9,10,11,12} {
	\foreach \j in {-5.5,-4.5,-3.5,0.5,1.5,2.5,3.5,4.5,5.5,6.5} {
			\node[roundnode3] (a0) at (\i*\XA,\j*\YA) {};   
	}
}

\draw [thin,black,<->] (-1.1*\XA,7.5*\YA) -- (2.1*\XA,7.5*\YA){};
\draw [thin,black,<->] (-1.1*\XA,8.3*\YA) -- (12.1*\XA,8.3*\YA){};
\draw [thin,black,<->] (2.9*\XA,7.5*\YA) -- (6.1*\XA,7.5*\YA){};
\draw [thin,black,<->] (8.9*\XA,7.5*\YA) -- (12.1*\XA,7.5*\YA){};
\node[draw=none,fill=none,] at (0.5*\XA,7.9*\YA) {$d$};
\node[draw=none,fill=none,] at (10.5*\XA,7.9*\YA) {$d$};
\node[draw=none,fill=none,] at (4.5*\XA,7.9*\YA) {$d$};
\node[draw=none,fill=none,] at (5.1*\XA,8.6*\YA) {$d^3$};

\draw [thin,black,<->] (-1.6*\XA,3.9*\YA) -- (-1.6*\XA,7.1*\YA){};
\draw [thin,black,<->] (-1.6*\XA,-0.1*\YA) -- (-1.6*\XA,3.1*\YA){};
\draw [thin,black,<->] (-1.6*\XA,-6.1*\YA) -- (-1.6*\XA,-2.9*\YA){};
\draw [thin,black,<->] (-2.4*\XA,-6.1*\YA) -- (-2.4*\XA,7.1*\YA){};
\node[draw=none,fill=none,] at (-2*\XA,5.5*\YA) {$d$};
\node[draw=none,fill=none,] at (-2*\XA,1.5*\YA) {$d$};
\node[draw=none,fill=none,] at (-2*\XA,-4.5*\YA) {$d$};
\node[draw=none,fill=none,] at (-2.9*\XA,0.5*\YA) {$d^3$};
\vspace{-2mm}
\end{tikzpicture}
\vspace{-2mm}
\caption{The graph $\overline{\Gg}_{d}$, here represented for $d=4$. The vertices in $V_d^1$ are represented in white, the vertices in $V_d^2$ are represented in blue and the vertices in $V_d^\ast$ are represented in green. The blue line represents the external border of the graph. The dashed squares represent the boxes.}\label{fig2}
\end{figure}
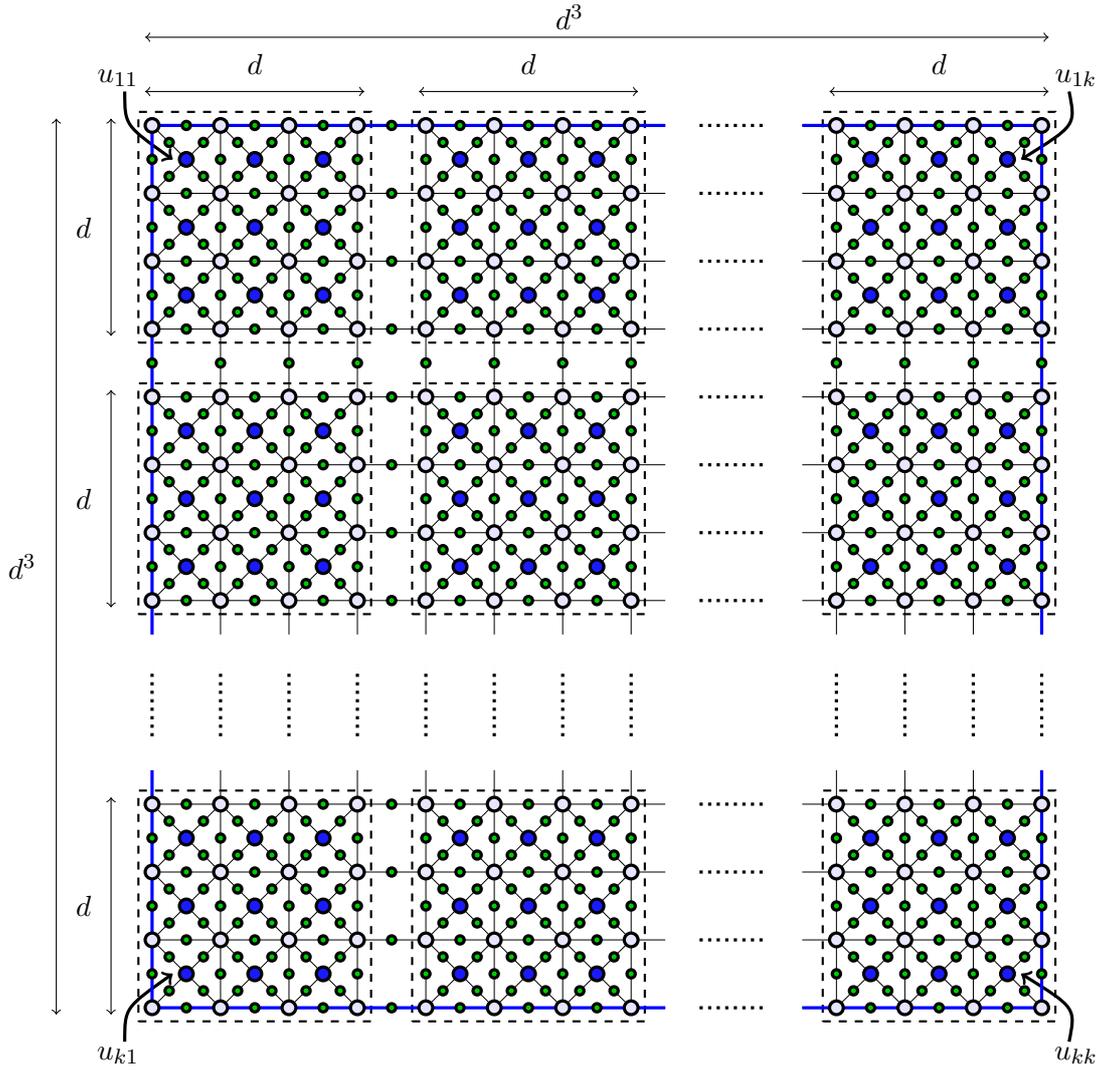
%\fi

The graph $\Gg_{d}$ is the graph with vertex set $V_d=V_d^1\cup V_d^2$ defined as follows. We start with a $d^3\times d^3$ square grid and denote $V^1_d$ the set of vertices of this grid (observe that $|V_d^1|=d^6$). 
This grid can be divided into $d^4$ contiguous square regions each of size $d\times d$. We call each region a box. In each box we place a vertex at the center of each $1\times 1$ square and connect it to the four corners of the square. Let $V_d^2$ denote the set of all these new vertices. We have $|V_d^2|=d^4(d-1)^2$. It will be convenient to denote those vertices $u_{ij}$ for $i,j\in\{1,\ldots,k\}$, where $k=d^2(d-1)$, with the index $i$ representing the horizontal position and the index $j$ representing the vertical position. This completes the description of $\Gg_{d}$.

The graph $\overline{\Gg}_{d}$ is obtained from $\Gg_{d}$ by the construction described in Section \ref{sec:edge-graph}: one vertex is inserted on each edge of $\Gg_{d}$. Let $V^\ast_d$ denote the introduced vertices. Note that 
$|V^\ast_d|=2d^3(d^3-1)+4d^4(d-1)^2$. Let us denote $\overline{V}_d=V^1_d\cup V^2_d \cup V^\ast_d$ the set of all vertices in  $\overline{\Gg}_d$ and write  
\[
n = |\overline{V}_d|=\Theta(d^6).
\]
For any vertex $u\in V_d$, let $\Bo(u)$ denote the unique $d\times d$ box in which $u$ is included. Finally, we denote $\bord(\overline{\Gg}_d)$ the external border of the graph, i.e., the perimeter of the whole grid.

%=============
\subsection{Definition of the relation}\label{sub:def}
%=============
Let $d$, $k$ and $n$ be as in Section \ref{subsec:graph}.
%We write $k=d^2(d-1)$ and $n=|\overline{V}_d|$ as in Section \ref{sec:construction}. 
Given a matrix $A\in\{0,1\}^{k\times k}$, consider the process $\Pp_d(A)$ described in Figure \ref{fig3}.
\begin{figure}[!ht]
\begin{center}
\fbox{
\begin{minipage}{14.5 cm} 
%\vspace{1mm}
\begin{itemize}
\item[1.]
Construct the graph state over $\overline{\Gg}_{d}$.
\item[2.]
For each vertex $u_{ij}\in V_d^2$, measure the qubit of the vertex $u_{ij}$ in the $X$-basis if $A_{ij}=0$, and measure it in the $Y$-basis if $A_{ij}=1$. 

For each vertex $u\in V_d^1\cup V_d^\ast$, measure the qubit of the vertex $u$ in the $X$-basis.
\end{itemize}
%\vspace{1mm}
\end{minipage}
}
\end{center}\vspace{-4mm}
\caption{The process $\Pp_d(A)$.}\label{fig3}
\end{figure}

In this process each node of $\overline{\Gg}_d$ performs a measurement and outputs one bit. We represent the whole output by a binary string of length $n$ by fixing an arbitrary ordering of the $n$ nodes of $\overline{\Gg}_d$. With this representation of measurement outcomes as strings, let
\[
\Lambda_d(A)\subseteq \{0,1\}^n
\] 
denote the set of all the strings that occur with non-zero probability in Process $\Pp_d(A)$.

\paragraph{Definition of the relation $\boldsymbol{R}$.}
For any even positive integer $d$, the computational problem that we consider is as follows: given a matrix $A\in\{0,1\}^{k\times k}$ as input, where $k=d^2(d-1)$, compute a string from $\Lambda_d(A)$. Note that since $|\Lambda_d(A)|> 1$ there are more than one valid output. This computational problem corresponds to the relation
\[
R=\big\{(A,z)\:|\: A\in \{0,1\}^{k\times k} \textrm{ and } z\in \Lambda_d(A)\big\}\subseteq \{0,1\}^{k\times k}\times \{0,1\}^n.
\]
By setting $m=k^2$ and identifying $\{0,1\}^{k\times k}$ with $\{0,1\}^{m}$, we interpret $R$ as a subset of $\{0,1\}^m\times \{0,1\}^n$.
This relation $R$ is the relation 
that appears in the statement of Theorem \ref{th:main}. To avoid confusion it will be preferable to make explicit the dependence on the parameter $d$. We will thus denote this relation by $R_d$ instead of $R$ in the next sections.

%=========================================
\section{Proof of Theorem \ref{th:main}}\label{sec:proof}
%=========================================
In this section we prove Theorem \ref{th:main}. Let $R_d$ be the relation defined in Section \ref{sub:def}.

In the quantum setting, the computational problem corresponding to $R_d$ can obviously be solved by directly implementing the process $\Pp_d(A)$. This can be done by a constant-depth quantum circuit since the graph $\overline{\Gg}_d$, which has constant degree, can be constructed in constant depth. Note that the description of the quantum circuit can be computed easily, e.g., by a logarithmic-space classical Turing machine.

We now show the classical lower bound, i.e., show that any classical circuit of sublogarithmic depth with bounded-fanin gates cannot output a string in $\Lambda_d(A)$ with high probability on a non-negligible fraction of the inputs $A$. For concreteness (and without loss of generality) we will assume in this section that all the gates in the classical circuit have fanin at most 2.

Consider any randomized classical circuit $C_d$, with gates of fanin at most 2, of depth at most $\frac{1}{8} \log_2 m$ for the relation~$R_d$. The circuit has $m=k^2=\Theta(d^6)$ input wires to receive the matrix~$A$ and $n$ output wires. Remember that $n=\Theta(d^6)$. To simplify the presentation we assume that $d$ is large enough so that the inequality 
\begin{equation}\label{ineq}
3n^{1/7}< d-2
\end{equation} 
holds. In Section \ref{subseqproof1} below we show how to associate the wires of $C_d$ to the nodes of $\overline{\Gg}_d$. In Section \ref{subseqproof2} we present technical results that exploit this correspondence. Finally, in Section \ref{subseqproof3} we give an upper bound on the success probability of $C_d$ and conclude the proof of Theorem \ref{th:main}.
%=====================
\subsection{Correspondence between \texorpdfstring{$\boldsymbol{C_d}$}{the circuit} and \texorpdfstring{$\boldsymbol{\overline{\Gg}_d}$}{the graph}}\label{subseqproof1}
%=====================
We associate the wires of $C_d$ to the nodes of $\overline{\Gg}_d$ in the following way. For any vertex $u_{ij}\in V^2_d$, we denote $x_{u_{ij}}$ the input wire of $C_d$ that receives the entry $A_{ij}$ of $A$. For any vertex $u\in \overline{V}_d$, we denote $z_u$ the output wire of $C_d$ that should output the outcome of the measurement performed at vertex $u$.

For any vertex $u\in \overline{V}_d$, we denote $L(z_u)$ the set of all vertices $v\in V_d^2$ such that the input wire $x_{v}$ is in the lightcone of $z_u$ (i.e., the value of $z_{u}$ depends on the value of $x_v$). For any $u\in V^2_d$, we denote $L(x_u)$ the set of all vertices $v\in \overline{V}_d$ such that the output wire $z_{v}$ is in the lightcone of $x_u$ (i.e., the value of $z_{v}$ depends on the value of $x_u$). Since the depth of $C_d$ is at most $\frac{1}{8} \log_2 m$ and since each gate of $C_d$ has fanin at most 2, we have $|L(z_u)|\le m^{1/8}\le n^{1/8}$ for each $u\in \overline{V}_d$.
Let us define the set
\[
\Gamma = \{u\in V_d^2 \:|\: L(x_u)> n^{1/7} \}.
\]
Since the number of input wires is $|V_d^2|=\Theta(n)$, a simple counting argument shows that $|\Gamma|=O(n^{55/56})$, i.e., most input wires have small lightcones as well. 

Define the sets $\Uu,\Vv, \Ww\subseteq V_d^2$ as follows:
\begin{align*}
\Uu&= \big\{u_{ij}\:|\:i\in\{1,\ldots,\floor{k/3}\} \textrm { and } j\in\{1,\ldots,\floor{k/3}\}\big\}\setminus\Gamma,\\
\Vv&= \big\{u_{ij}\:|\:i\in\{\ceil{2k/3},\ldots,k\} \textrm { and } j\in\{1,\ldots,\floor{k/3}\}\big\}\setminus\Gamma,\\
\Ww&= \big\{u_{ij}\:|\:i\in\{\ceil{2k/3},\ldots,k\} \textrm { and } j\in\{\ceil{2k/3},\ldots,k\}\big\}\setminus\Gamma.
\end{align*}
These three sets represent the vertices in $V_d^2\cap \Gamma$ that are in the upper left part, the upper right part, and the lower right part of the graph $\overline{\Gg}_d$, respectively. From the above discussion we have $|\Uu|=\Theta(n)$, $|\Vv|=\Theta(n)$ and $|\Ww|=\Theta(n)$. 

%======================================
\subsection{Graph-theoretic arguments}\label{subseqproof2}
%======================================
We start with a first lemma, which is similar to \cite[Claim 6]{Bravyi+17}. 
\begin{lemma}\label{lemma1}
The number of triples $(u,v,w)\in \Uu\times\Vv\times\Ww$ such that the three conditions
\begin{itemize}
\item
$L(x_u)\cap \Bo(v)=\emptyset$ and $L(x_u)\cap \Bo(w)=\emptyset$;
\item
$L(x_v)\cap \Bo(u)=\emptyset$ and $L(x_v)\cap \Bo(w)=\emptyset$;
\item
$L(x_w)\cap \Bo(u)=\emptyset$ and $L(x_w)\cap \Bo(v)=\emptyset$.
\end{itemize}
do not simultaneously hold is $O(n^{2+10/21})$.
\end{lemma}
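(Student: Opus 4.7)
The plan is to bound the number of bad triples by a union bound over the six lightcone-intersection conditions, using the two key size estimates: each relevant lightcone has at most $n^{1/7}$ vertices, and each box contains only $O(n^{1/3})$ vertices.

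First, I would fix one of the six symmetric conditions, say $L(x_u)\cap \Bo(v)=\emptyset$, and count triples $(u,v,w)\in\Uu\times\Vv\times\Ww$ that violate it. Since $u\in\Uu\subseteq V_d^2\setminus\Gamma$, by definition of $\Gamma$ we have $|L(x_u)|\le n^{1/7}$. The key point is that each vertex of $L(x_u)\subseteq \overline{V}_d$ lies in at most one box, so $L(x_u)$ meets at most $n^{1/7}$ boxes. Moreover, since $v\in V_d^2$ and each box contains only $(d-1)^2=\Theta(n^{1/3})$ vertices of $V_d^2$, the number of $v\in\Vv$ whose box is hit by $L(x_u)$ is at most $n^{1/7}\cdot n^{1/3}=n^{10/21}$.

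Next, summing over all $u\in\Uu$ (of which there are $\Theta(n)$) yields $O(n^{1+10/21})$ bad pairs $(u,v)$ under this condition, and each such pair admits at most $|\Ww|=\Theta(n)$ choices of $w$, giving $O(n^{2+10/21})$ bad triples. Finally, I would note that the argument is symmetric in the roles of the three sets and the two "box" constraints per set: each of the six conditions $L(x_\bullet)\cap\Bo(\bullet)=\emptyset$ gives, by exactly the same bound, at most $O(n^{2+10/21})$ violating triples. A union bound over the six conditions then yields the claimed bound $O(n^{2+10/21})$.

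The calculation is essentially routine; the only thing to be slightly careful about is the bookkeeping of which sets $L(x_u)$ and $\Bo(v)$ live in (the former in $\overline{V}_d$, the latter a subset of $\overline{V}_d$ of size $O(d^2)=O(n^{1/3})$), and to observe that the exponent matches because $\tfrac{1}{7}+\tfrac{1}{3}=\tfrac{10}{21}$. No step looks like a genuine obstacle; the only nontrivial ingredients — the small-lightcone bound $|L(x_u)|\le n^{1/7}$ for $u\notin\Gamma$, and the definition of $\Gamma$ so that $|\Uu|,|\Vv|,|\Ww|=\Theta(n)$ — are already in place from Section \ref{subseqproof1}.
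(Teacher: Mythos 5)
Your proof is correct and follows essentially the same route as the paper: the paper phrases the same count as a probability over a uniformly random triple (each of the six conditions fails with probability $O(n^{-11/21})$, then multiply by $|\Uu\times\Vv\times\Ww|=\Theta(n^3)$), which is just your direct count written in probabilistic language. The key steps — at most $n^{1/7}$ boxes hit by $L(x_u)$ for $u\notin\Gamma$, each box carrying $\Theta(n^{1/3})$ vertices of $V_d^2$, so $O(n^{10/21})$ bad partners, and a union bound over the six conditions — are identical.
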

\begin{proof}
Observe that for each $u\in \Uu$, there are at most $n^{1/7}$ boxes that intersect $L(x_u)$. Since each box contains $(d-1)^2=O(n^{1/3})$ vertices in $V_d^2$, there are at most $O(n^{10/21})$ vertices $v\in \Vv$ such that $\Bo(v)$ intersects $L(x_u)$.
Assume that we choose a vertex $v$ uniformly at random in $\Vv$. Then we have 
\[
\Pr_{v\in \Vv}\left[L(x_u)\cap \Bo(v)\neq \emptyset \right]=O\left(n^{-11/21}\right).
\]
Applying the union bound shows that if we choose a triple $(u,v,w)$ uniformly at random in $\Uu\times\Vv\times\Ww$, then the probability that this triple does not satisfy all the three conditions of the lemma is $O(n^{-11/21})$. Since $|\Uu\times\Vv\times\Ww|=\Theta(n^3)$, we thus obtain the statement of the lemma.
\end{proof}

The following simple lemma will be crucial for our analysis.
\begin{lemma}\label{lemma2}
The number of triples $(u,v,w)\in \Uu\times\Vv\times\Ww$  such that the three lightcones $L(x_u)$, $L(x_v)$ and $L(x_w)$ are not pairwise disjoint is $O(n^{2+2/7})$. 
\end{lemma}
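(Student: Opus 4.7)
The plan is a clean double-counting argument that exploits two a priori size bounds on lightcones that are already in force: for every $r\in\overline{V}_d$ the depth hypothesis yields $|L(z_r)|\le n^{1/8}$, and for every $u\in V_d^2\setminus\Gamma$ (in particular for every $u$ in $\Uu\cup\Vv\cup\Ww$) the defining property of $\Gamma$ gives $|L(x_u)|\le n^{1/7}$. In addition I would use the obvious duality
\[
r\in L(x_u)\ \Longleftrightarrow\ u\in L(z_r),
\]
which is nothing but the restatement ``$z_r$ depends on $x_u$''. In particular, $L(x_u)\cap L(x_v)\neq\emptyset$ iff there exists $r\in\overline{V}_d$ such that both $u$ and $v$ lie in $L(z_r)$.

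First, by the union bound it is enough to bound separately, for each of the three pairs $(\Uu,\Vv)$, $(\Uu,\Ww)$, $(\Vv,\Ww)$, the number of triples in $\Uu\times\Vv\times\Ww$ whose two lightcones corresponding to that pair intersect; by symmetry of the argument I would only treat the pair $(\Uu,\Vv)$.

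Fix $u\in\Uu$. I would bound the number of $v\in\Vv$ with $L(x_u)\cap L(x_v)\neq\emptyset$ as follows. There are at most $|L(x_u)|\le n^{1/7}$ possible ``witnesses'' $r\in L(x_u)$, and for each such $r$ the set of $v\in V_d^2$ satisfying $r\in L(x_v)$ is exactly $L(z_r)$, of cardinality at most $n^{1/8}$. Hence at most $n^{1/7}\cdot n^{1/8}=n^{15/56}$ values of $v$ survive. Summing over the $|\Uu|=O(n)$ choices of $u$ and multiplying by $|\Ww|=O(n)$ to account for $w$, the count is $O(n^{2+15/56})$, and since $15/56<16/56=2/7$ this is $O(n^{2+2/7})$.

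There is no real obstacle beyond coupling the two lightcone bounds through the dual viewpoint; the only thing that deserves a sanity check is the arithmetic $1/7+1/8=15/56\le 2/7$, which is immediate. It is worth noting that the argument does not use any geometry of $\overline{\Gg}_d$ (unlike Lemma \ref{lemma1}, which crucially uses that a box covers only $O(n^{1/3})$ vertices of $V_d^2$): the bound depends only on the circuit depth and on the threshold defining $\Gamma$.
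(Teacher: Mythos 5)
Your proof is correct and is essentially the same double-counting argument as in the paper, which fixes a witness vertex $t$ and computes $\Pr_{(u,v)}[t\in L(x_u)\cap L(x_v)]$ before applying a union bound over $t$ and over the three pairs. The only (harmless) difference is that you combine the input-side bound $|L(x_u)|\le n^{1/7}$ with the output-side bound $|L(z_r)|\le n^{1/8}$, which gives the slightly sharper exponent $n^{2+15/56}$, whereas the paper slackens $|L(z_t)|\le n^{1/8}\le n^{1/7}$ and lands exactly on $n^{2+2/7}$; both are valid and your observation that the argument is purely information-theoretic (no geometry of $\overline{\Gg}_d$) is accurate.
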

\begin{proof}
Let $t\in\overline{V}_d$ be any vertex of $\overline\Gg_d$. When choosing $(u,v)$ uniformly at random in $\Uu\times \Vv$, the probability that $t$ is in $L(x_u)\cap L(x_v)$ is $O((n^{1/7}/n)^2)=O(n^{-12/7})$. By the union bound this implies that when choosing a triple $(u,v,w)$ uniformly at random in $\Uu\times \Vv\times \Ww$, the probability that $x$ is in more than one of the three lightcones $L(x_u)$, $L(x_v)$ and $L(x_w)$ is $O(n^{-12/7})$ as well. By the union bound again, we conclude that when choosing $(u,v,w)$ uniformly at random in $\Uu\times \Vv\times \Ww$, the probability that the three lightcones $L(x_u)$, $L(x_v)$ and $L(x_w)$ are not pairwise disjoint is $O(n^{-5/7})$.   
\end{proof}

The following proposition is the main result of this subsection.
\begin{proposition}\label{prop1}
There exists a triple of vertices $(u,v,w)\in \Uu\times \Vv\times \Ww$ such that all the following conditions hold:
\begin{itemize}
%\item[(1)]
%$L(x_u)\cap \Bo(v)=\emptyset$ and $L(x_u)\cap \Bo(w)=\emptyset$;
%\item[(2)]
%$L(x_v)\cap \Bo(u)=\emptyset$ and $L(x_v)\cap \Bo(w)=\emptyset$;
%\item[(3)]
%$L(x_w)\cap \Bo(u)=\emptyset$ and $L(x_w)\cap \Bo(v)=\emptyset$;
\item[(i)]
the lightcones $L(x_u)$, $L(x_v)$ and $L(x_w)$ are pairwise disjoint;
\item[(ii)]
there exists a cycle $\Cc$ containing $u$, $v$ and $w$ such that 
\begin{itemize}
\item[(ii-a)]
$\Cc$ does not use any edge from the external border $\bord({\overline{\Gg}_d})$;
\item[(ii-b)]
$\Cc\cap V_d^2=\{u,v,w\}$;
\item[(ii-c)]
$q_1\cap L(x_w)=\emptyset$, $q_2\cap L(x_u)=\emptyset$ and $q_3\cap L(x_v)=\emptyset$, where $q_1$ denotes the direct path\footnote{There are two paths from $v$ to $w$ in the cycle $\Cc$: one path going via $u$ and one path not using $u$. The direct path is the latter.} from $v$ to $w$ in the cycle $\Cc$,  $q_2$ denotes the direct path from $u$ to $w$ in $\Cc$ and let $q_3$ denote the direct path from $u$ to $v$ in $\Cc$.
\end{itemize}
\end{itemize}
\end{proposition}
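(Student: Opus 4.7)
The plan is to combine a counting argument with an explicit geometric construction of the cycle. First, by the union bound applied to Lemmas \ref{lemma1} and \ref{lemma2}, all but $O(n^{2+10/21})=o(n^3)$ triples $(u,v,w)\in\Uu\times\Vv\times\Ww$ simultaneously satisfy the six box-disjointness conditions of Lemma \ref{lemma1} and the pairwise disjoint-lightcones conclusion of Lemma \ref{lemma2}. Since $|\Uu\times\Vv\times\Ww|=\Theta(n^3)$, a constant fraction of triples is ``good'' in this sense; for any such triple, condition (i) is immediate, and the Lemma \ref{lemma1} conditions will supply the local slack needed to control the routing step.

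To make (ii-a) easy to guarantee I would further restrict to triples whose boxes $\Bo(u),\Bo(v),\Bo(w)$ lie strictly in the interior of the $d^2\times d^2$ array of boxes, so that all four $V_d^1$-corners of each of $u,v,w$ lie well away from $\bord(\overline{\Gg}_d)$. This extra restriction removes only $O(d^4)=O(n^{2/3})$ control vertices from each of $\Uu,\Vv,\Ww$, leaving a $\Theta(n^3)$ subset of good triples. I would then build $\Cc$ as three direct paths glued at $u,v,w$: a path $q_3$ from $u$ to $v$ across a top band, a path $q_1$ from $v$ to $w$ down a right band, and a path $q_2$ from $w$ back to $u$ along the bottom and up the left band. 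Each such path is constructed from grid edges of $\Gg_d$ extended through their inserted $V_d^\ast$-midpoints, entering and leaving $u,v,w$ through two of their four $V_d^1$-corners; no other $V_d^2$ vertex is touched, which gives (ii-b), and keeping every path at least one row or column inside the perimeter (feasible thanks to inequality~(\ref{ineq})) gives (ii-a).

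The most delicate point is (ii-c), which requires the interior of each path to avoid the prescribed lightcone of size at most $n^{1/7}$ (using that $u,v,w\notin\Gamma$). Two observations handle this. Near the two boxes containing the endpoints of a given path, the relevant Lemma~\ref{lemma1} condition tells us that the forbidden lightcone avoids both of those boxes, so the path's pieces lying inside those boxes are automatically safe. In the middle section, the band available for routing has width $\Theta(d^3)=\Theta(n^{1/2})$ and therefore offers $\Omega(n^{1/2})$ essentially parallel corridors; since the forbidden lightcone contains only $n^{1/7}\ll n^{1/2}$ vertices it can block at most $n^{1/7}$ corridors, and a pigeonhole argument produces a free one. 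Stitching the box-local pieces and the middle corridor together yields each $q_i$.

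The main obstacle I expect in expanding this sketch into a complete proof is the combinatorial bookkeeping of the routing step — in particular verifying that the three corridors chosen independently for $q_1,q_2,q_3$ can be assembled into a single simple cycle sharing only the vertices $u,v,w$. Since each of the three forbidden lightcones blocks at most $n^{1/7}$ corridors, at most $3n^{1/7}$ corridors are blocked in total, still far less than the $\Theta(n^{1/2})$ available ones, so simultaneous avoidance should go through by a further union bound; the delicate part is to pick the top, right and bottom-left bands to occupy mutually disjoint slabs of rows and columns of $\overline{\Gg}_d$ so that the three resulting paths meet only at $\{u,v,w\}$, rather than any deeper conceptual obstacle.
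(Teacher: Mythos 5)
Your proposal follows the same blueprint as the paper's proof: (1) use Lemmas \ref{lemma1} and \ref{lemma2} to find a triple $(u,v,w)$ satisfying condition (i) and the box-disjointness conditions; (2) construct a family of candidate paths connecting the boxes $\Bo(u),\Bo(v),\Bo(w)$; (3) use a counting/pigeonhole argument to pick paths avoiding the lightcones; (4) appeal to Lemma \ref{lemma1} to handle the path segments inside the endpoint boxes. The paper carries out steps (2)--(3) by constructing $3(d-2)$ pairwise-disjoint paths $p^1_i,p^2_i,p^3_i$, one family of $d-2$ paths per pair of boxes, each entering and leaving through a fixed border vertex of the relevant box; disjointness means a lightcone of total size $\le 3n^{1/7}$ can block at most $3n^{1/7}$ of these $3(d-2)$ paths, and inequality (\ref{ineq}) guarantees a free index in each family. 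This makes the ``routing bookkeeping'' you flag at the end essentially automatic.

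One point in your argument is imprecise and worth correcting: you count $\Omega(n^{1/2})=\Omega(d^3)$ parallel corridors in the middle section and argue $n^{1/7}\ll n^{1/2}$ suffices. But the actual bottleneck is not the width of the grid; it is the boundary of each $d\times d$ box, which offers only $d-2=\Theta(n^{1/6})$ entry/exit vertices per side. Every admissible corridor must be stitched to one of those $d-2$ box-local pieces, so the effective number of independent candidate paths per family is $\Theta(n^{1/6})$, not $\Theta(n^{1/2})$. The correct inequality to invoke is therefore $3n^{1/7}<d-2$, i.e.\ precisely the paper's inequality (\ref{ineq}) --- which you cite only to guarantee (ii-a), rather than where it is actually needed, namely for the pigeonhole. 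Because $n^{1/7}<n^{1/6}$ the conclusion is unaffected, but the margin is much thinner than $n^{1/7}\ll n^{1/2}$ suggests, and a naive ``lots of corridors in the middle'' count would fail had the box size been chosen smaller than $n^{1/7}$. Your extra restriction to boxes strictly in the interior is a clean shortcut for (ii-a); the paper instead routes through interior-facing borders of the boxes (right/bottom for $\Bo(u)$, left/bottom for $\Bo(v)$, top/left for $\Bo(w)$), which achieves the same thing without shrinking $\Uu,\Vv,\Ww$.
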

\begin{proof}
Lemmas \ref{lemma1} and \ref{lemma2} imply that among the $\Theta(n^3)$ triples $(u,v,w)\in \Uu\times \Vv\times \Ww$ there exists one triple such that Condition (i) and the three conditions of Lemma \ref{lemma1} simultaneously hold. Let us fix such a triple.
%\begin{itemize}
%\item
%$L(x_u)\cap \Bo(v)=\emptyset$ and $L(x_u)\cap \Bo(w)=\emptyset$;
%\item
%$L(x_v)\cap \Bo(u)=\emptyset$ and $L(x_v)\cap \Bo(w)=\emptyset$;
%\item
%$L(x_w)\cap \Bo(u)=\emptyset$ and $L(x_w)\cap \Bo(v)=\emptyset$;
%\end{itemize}
%hold.

Let $u_1,\ldots,u_{d-2}$ denote the vertices on the right border\footnote{To simplify the presentation we exclude the two corners at the extremities of each border.} of $\Bo(u)$ and $u'_1,\ldots,u'_{d-2}$ denote the vertices on the bottom border of $\Bo(u)$. Similarly, let $v_1,\ldots,v_{d-2}$ denote the vertices on the left border of $\Bo(v)$ and $v'_1,\ldots,v'_{d-2}$ denote the vertices on the bottom border of $\Bo(v)$. Finally, let $w_1,\ldots,w_{d-2}$ denote the vertices on the top border of $\Bo(w)$ and $w'_1,\ldots,w'_{d-2}$ denote the vertices on the left border of $\Bo(w)$. We refer to Figure \ref{fig5} for an illustration.

We can construct a path $p^1_i$ from $u_i$ to $v_i$, a path $p^2_i$ from $v'_i$ to $w_i$ and a path $p^3_i$ from $w'_i$ to $u'_i$, for each $i\in\{1,\ldots,d-2\}$, so that the $3(d-2)$ paths constructed are disjoint, do not use any edge on the border $\bord({\overline{\Gg}_d})$, do not go through any vertex in $V^2_d$, and do not contain any vertex in $\Bo(u)\cup\Bo(v)\cup\Bo(w)$ except their endpoints. From Inequality (\ref{ineq}) and since the three lightcones $L(x_u)$, $L(x_v)$ and $L(x_w)$ do not have size larger than $n^{1/7}$,
 %$3n^{1/7}< d-2$ holds when $d=\theta(n^{1/6})$ is large enough, since $n=\Theta(d^6)$. Then 
there necessarily exist three indices $i_1,i_2,i_3\in\{1,\ldots,d-2\}$ such that the three paths $p^1_{i_1}$, $p^2_{i_2}$ and $p^3_{i_3}$ do not contain any vertex in $L(x_u)\cup L(x_v)\cup L(x_v)$. Finally, observe that these three paths can be completed (avoiding all vertices in $V_d^2\setminus\{u,v,w\}$) to obtain a cycle 
\[
u\xrightarrow{}u_{i_1}
\xrightarrow{p^1_{i_1}} v_{i_1}
\xrightarrow{} v
\xrightarrow{} v_{i_2}'
\xrightarrow{p^2_{i_2}} w_{i_2}
\xrightarrow{} w
\xrightarrow{} w'_{i_3}
\xrightarrow{p^3_{i_3}} u'_{i_3}
\xrightarrow{} u
\]
that satisfies Conditions (ii-a), (ii-b) and (ii-c). 
See Figure \ref{fig5} for an illustration.
Note that Condition~(ii-c) can be guaranteed due to the fact that $(u,v,w)$ satisfies the three conditions from Lemma \ref{lemma1}.
%This is the cycle $\Cc$ mentioned in Condition (ii) of the lemma. 
\end{proof}

%\iffalse
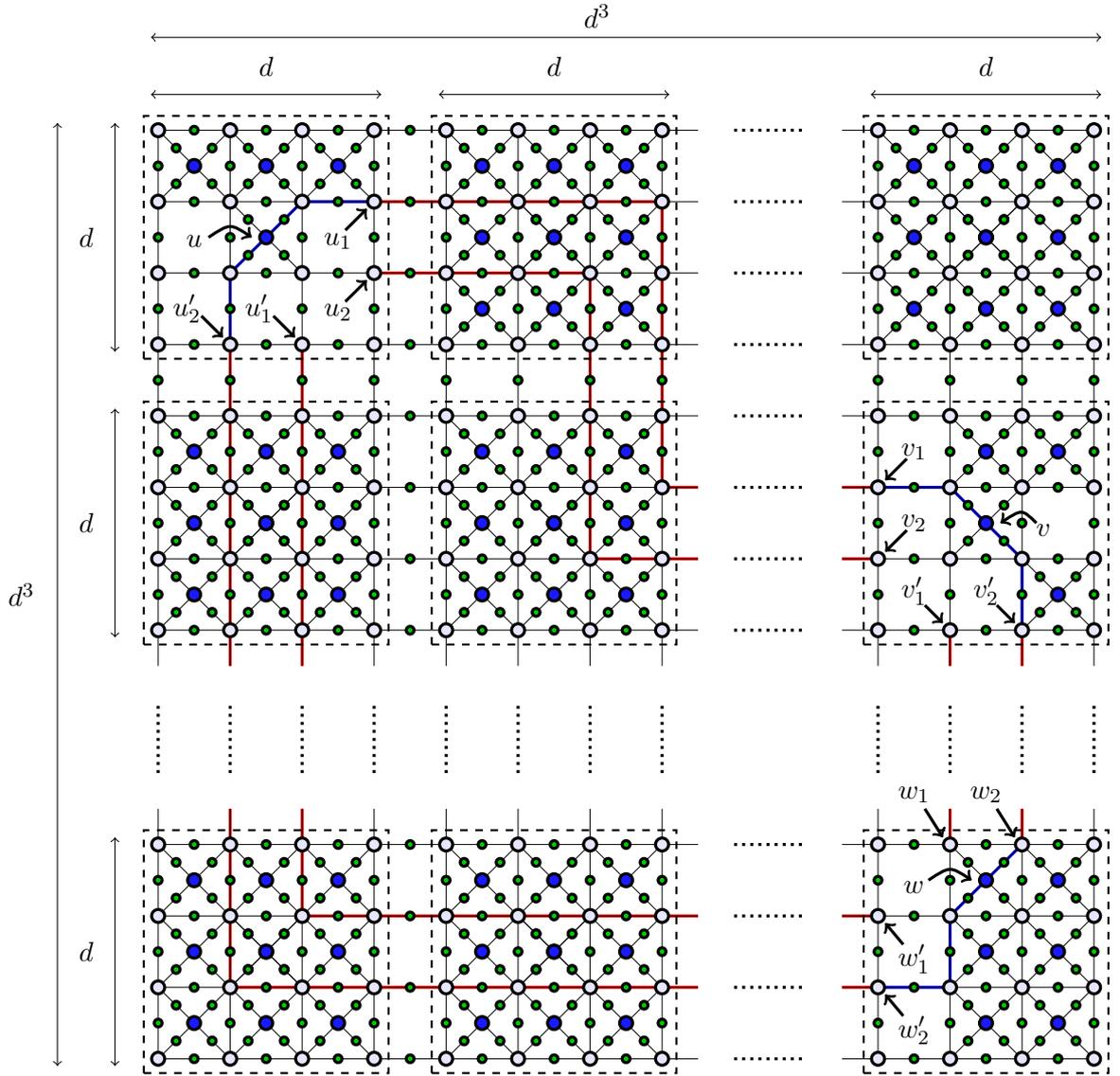
\begin{figure}[!t]
\vspace{3mm}
\centering
\begin{tikzpicture}[scale=0.5,roundnode/.style={circle, draw=black!100, fill=blue!10, very thick, minimum size=1mm,scale=0.5},roundnode2/.style={circle, draw=black!100, fill=blue!90, very thick, minimum size=1mm,scale=0.5},roundnode3/.style={circle, draw=black!100, fill=black!20!green, very thick, minimum size=1mm,scale=0.3}]
\newcommand\XA{2}
\newcommand\YA{2}

\draw [very thick,blue] (0.5*\XA,5.5*\YA)  --  (1*\XA,6*\YA) -- (2*\XA,6*\YA) {};

\draw [very thick,red] (2*\XA,6*\YA) --  (6*\XA,6*\YA) --  (6*\XA,2*\YA)  --  (6.5*\XA,2*\YA){};

\draw [very thick,red]  (2*\XA,5*\YA) -- (5*\XA,5*\YA) -- (5*\XA,1*\YA) -- (6.5*\XA,1*\YA){};

\draw [very thick,red] (8.5*\XA,2*\YA) --  (9*\XA,2*\YA){};
\draw [very thick,blue] (9*\XA,2*\YA) --  (10*\XA,2*\YA) -- (10.5*\XA,1.5*\YA){};

%\draw [very thick,red] (2*\XA,5*\YA) -- (6.5*\XA,5*\YA){};
\draw [very thick,red] (8.5*\XA,1*\YA) -- (9*\XA,1*\YA){};

\draw [very thick,red] (11*\XA,0*\YA) --  (11*\XA,-0.5*\YA){};
\draw [very thick,blue] (10.5*\XA,1.5*\YA) -- (11*\XA,1*\YA) --  (11*\XA,0*\YA){};

\draw [very thick,red]   (10*\XA,0*\YA) --  (10*\XA,-0.5*\YA){};

\draw [very thick,red]  (10*\XA,-2.5*\YA) -- (10*\XA,-3*\YA){};
\draw [very thick,red]  (11*\XA,-2.5*\YA) -- (11*\XA,-3*\YA){};

\draw [very thick,red]  (8.5*\XA,-5*\YA) -- (9*\XA,-5*\YA) {};
\draw [very thick,blue]  (9*\XA,-5*\YA) -- (10*\XA,-5*\YA) -- (10*\XA,-4*\YA) -- (11*\XA,-3*\YA) {};

\draw [very thick,red]    (9*\XA,-4*\YA) -- (8.5*\XA,-4*\YA){};

\draw [very thick,red]    (6.5*\XA,-4*\YA) -- (1*\XA,-4*\YA) -- (1*\XA,-2.5*\YA){};
\draw [very thick,red]    (1*\XA,-0.5*\YA) -- (1*\XA,4*\YA) {};
%-- (0*\XA,4*\YA){};

\draw [very thick,red]    (6.5*\XA,-5*\YA) -- (0*\XA,-5*\YA) -- (0*\XA,-2.5*\YA){};

\draw [very thick,red]    (0*\XA,-0.5*\YA) -- (0*\XA,4*\YA) {};

\draw [very thick,blue]  (0*\XA,4*\YA) -- (0*\XA,5*\YA) -- (0.5*\XA,5.5*\YA){};

\foreach \i in {-2,-1,0,2,3,4,8,9,10}{
\foreach \j in {5}{
%\node[draw=none,fill=none,] at (7.35*\XA,0.5*\YA) {$u$};
\draw [very thin,black] (\i*\XA+1.5*\XA,\j*\YA+1.5*\XA) -- (\i*\XA+1*\XA,\j*\YA+1*\XA){};
\draw [very thin,black] (\i*\XA+1.5*\XA,\j*\YA+1.5*\XA) -- (\i*\XA+2*\XA,\j*\YA+1*\XA){};
\draw [very thin,black] (\i*\XA+1.5*\XA,\j*\YA+1.5*\XA) -- (\i*\XA+1*\XA,\j*\YA+2*\XA){};
\draw [very thin,black] (\i*\XA+1.5*\XA,\j*\YA+1.5*\XA) -- (\i*\XA+2*\XA,\j*\YA+2*\XA){};
\node[roundnode2] (a0) at (\i*\XA+1.5*\XA,\j*\YA+1.5*\XA) {}; 
\node[roundnode3] (a0) at (\i*\XA+1.75*\XA,\j*\YA+1.75*\XA) {}; 
\node[roundnode3] (a0) at (\i*\XA+1.75*\XA,\j*\YA+1.25*\XA) {}; 
\node[roundnode3] (a0) at (\i*\XA+1.25*\XA,\j*\YA+1.25*\XA) {}; 
\node[roundnode3] (a0) at (\i*\XA+1.25*\XA,\j*\YA+1.75*\XA) {}; 
}}

\foreach \i in {2,3,4,8,9,10}{
\foreach \j in {3,4}{
%\node[draw=none,fill=none,] at (7.35*\XA,0.5*\YA) {$u$};
\draw [very thin,black] (\i*\XA+1.5*\XA,\j*\YA+1.5*\XA) -- (\i*\XA+1*\XA,\j*\YA+1*\XA){};
\draw [very thin,black] (\i*\XA+1.5*\XA,\j*\YA+1.5*\XA) -- (\i*\XA+2*\XA,\j*\YA+1*\XA){};
\draw [very thin,black] (\i*\XA+1.5*\XA,\j*\YA+1.5*\XA) -- (\i*\XA+1*\XA,\j*\YA+2*\XA){};
\draw [very thin,black] (\i*\XA+1.5*\XA,\j*\YA+1.5*\XA) -- (\i*\XA+2*\XA,\j*\YA+2*\XA){};
\node[roundnode2] (a0) at (\i*\XA+1.5*\XA,\j*\YA+1.5*\XA) {}; 
\node[roundnode3] (a0) at (\i*\XA+1.75*\XA,\j*\YA+1.75*\XA) {}; 
\node[roundnode3] (a0) at (\i*\XA+1.75*\XA,\j*\YA+1.25*\XA) {}; 
\node[roundnode3] (a0) at (\i*\XA+1.25*\XA,\j*\YA+1.25*\XA) {}; 
\node[roundnode3] (a0) at (\i*\XA+1.25*\XA,\j*\YA+1.75*\XA) {}; 
}}

\foreach \i in {-2,-1,0,2,3,4}{
\foreach \j in {1,0,-1,-5,-6,-7}{
%\node[draw=none,fill=none,] at (7.35*\XA,0.5*\YA) {$u$};
\draw [very thin,black] (\i*\XA+1.5*\XA,\j*\YA+1.5*\XA) -- (\i*\XA+1*\XA,\j*\YA+1*\XA){};
\draw [very thin,black] (\i*\XA+1.5*\XA,\j*\YA+1.5*\XA) -- (\i*\XA+2*\XA,\j*\YA+1*\XA){};
\draw [very thin,black] (\i*\XA+1.5*\XA,\j*\YA+1.5*\XA) -- (\i*\XA+1*\XA,\j*\YA+2*\XA){};
\draw [very thin,black] (\i*\XA+1.5*\XA,\j*\YA+1.5*\XA) -- (\i*\XA+2*\XA,\j*\YA+2*\XA){};
\node[roundnode2] (a0) at (\i*\XA+1.5*\XA,\j*\YA+1.5*\XA) {}; 
\node[roundnode3] (a0) at (\i*\XA+1.75*\XA,\j*\YA+1.75*\XA) {}; 
\node[roundnode3] (a0) at (\i*\XA+1.75*\XA,\j*\YA+1.25*\XA) {}; 
\node[roundnode3] (a0) at (\i*\XA+1.25*\XA,\j*\YA+1.25*\XA) {}; 
\node[roundnode3] (a0) at (\i*\XA+1.25*\XA,\j*\YA+1.75*\XA) {}; 
}}

\foreach \i in {9,10}{
\foreach \j in {1,-7}{
%\node[draw=none,fill=none,] at (7.35*\XA,0.5*\YA) {$u$};
\draw [very thin,black] (\i*\XA+1.5*\XA,\j*\YA+1.5*\XA) -- (\i*\XA+1*\XA,\j*\YA+1*\XA){};
\draw [very thin,black] (\i*\XA+1.5*\XA,\j*\YA+1.5*\XA) -- (\i*\XA+2*\XA,\j*\YA+1*\XA){};
\draw [very thin,black] (\i*\XA+1.5*\XA,\j*\YA+1.5*\XA) -- (\i*\XA+1*\XA,\j*\YA+2*\XA){};
\draw [very thin,black] (\i*\XA+1.5*\XA,\j*\YA+1.5*\XA) -- (\i*\XA+2*\XA,\j*\YA+2*\XA){};
\node[roundnode2] (a0) at (\i*\XA+1.5*\XA,\j*\YA+1.5*\XA) {}; 
\node[roundnode3] (a0) at (\i*\XA+1.75*\XA,\j*\YA+1.75*\XA) {}; 
\node[roundnode3] (a0) at (\i*\XA+1.75*\XA,\j*\YA+1.25*\XA) {}; 
\node[roundnode3] (a0) at (\i*\XA+1.25*\XA,\j*\YA+1.25*\XA) {}; 
\node[roundnode3] (a0) at (\i*\XA+1.25*\XA,\j*\YA+1.75*\XA) {}; 
}}

\foreach \i in {9}{
\foreach \j in {-5}{
%\node[draw=none,fill=none,] at (7.35*\XA,0.5*\YA) {$u$};
\draw [very thin,black] (\i*\XA+1.5*\XA,\j*\YA+1.5*\XA) -- (\i*\XA+1*\XA,\j*\YA+1*\XA){};
\draw [very thin,black] (\i*\XA+1.5*\XA,\j*\YA+1.5*\XA) -- (\i*\XA+2*\XA,\j*\YA+1*\XA){};
\draw [very thin,black] (\i*\XA+1.5*\XA,\j*\YA+1.5*\XA) -- (\i*\XA+1*\XA,\j*\YA+2*\XA){};
\draw [very thin,black] (\i*\XA+1.5*\XA,\j*\YA+1.5*\XA) -- (\i*\XA+2*\XA,\j*\YA+2*\XA){};
\node[roundnode2] (a0) at (\i*\XA+1.5*\XA,\j*\YA+1.5*\XA) {}; 
\node[roundnode3] (a0) at (\i*\XA+1.75*\XA,\j*\YA+1.75*\XA) {}; 
\node[roundnode3] (a0) at (\i*\XA+1.75*\XA,\j*\YA+1.25*\XA) {}; 
\node[roundnode3] (a0) at (\i*\XA+1.25*\XA,\j*\YA+1.25*\XA) {}; 
\node[roundnode3] (a0) at (\i*\XA+1.25*\XA,\j*\YA+1.75*\XA) {}; 
}}

\foreach \i in {10}{
\foreach \j in {-1,-6,-5}{
%\node[draw=none,fill=none,] at (7.35*\XA,0.5*\YA) {$u$};
\draw [very thin,black] (\i*\XA+1.5*\XA,\j*\YA+1.5*\XA) -- (\i*\XA+1*\XA,\j*\YA+1*\XA){};
\draw [very thin,black] (\i*\XA+1.5*\XA,\j*\YA+1.5*\XA) -- (\i*\XA+2*\XA,\j*\YA+1*\XA){};
\draw [very thin,black] (\i*\XA+1.5*\XA,\j*\YA+1.5*\XA) -- (\i*\XA+1*\XA,\j*\YA+2*\XA){};
\draw [very thin,black] (\i*\XA+1.5*\XA,\j*\YA+1.5*\XA) -- (\i*\XA+2*\XA,\j*\YA+2*\XA){};
\node[roundnode2] (a0) at (\i*\XA+1.5*\XA,\j*\YA+1.5*\XA) {}; 
\node[roundnode3] (a0) at (\i*\XA+1.75*\XA,\j*\YA+1.75*\XA) {}; 
\node[roundnode3] (a0) at (\i*\XA+1.75*\XA,\j*\YA+1.25*\XA) {}; 
\node[roundnode3] (a0) at (\i*\XA+1.25*\XA,\j*\YA+1.25*\XA) {}; 
\node[roundnode3] (a0) at (\i*\XA+1.25*\XA,\j*\YA+1.75*\XA) {}; 
}}

\foreach \i in {-1,3,9}{
\foreach \j in {0,4,-6}{
%\node[draw=none,fill=none,] at (7.35*\XA,0.5*\YA) {$u$};
\draw[dashed,draw=black,fill=none,thick] (\i*\XA-0.2*\XA,\j*\YA-0.2*\YA) rectangle (\i*\XA+3.2*\XA,\j*\YA+3.2*\YA) ;
\draw [very thin,black] (\i*\XA+1.5*\XA,\j*\YA+1.5*\XA) -- (\i*\XA+1*\XA,\j*\YA+1*\XA){};
\draw [very thin,black] (\i*\XA+1.5*\XA,\j*\YA+1.5*\XA) -- (\i*\XA+2*\XA,\j*\YA+1*\XA){};
\draw [very thin,black] (\i*\XA+1.5*\XA,\j*\YA+1.5*\XA) -- (\i*\XA+1*\XA,\j*\YA+2*\XA){};
\draw [very thin,black] (\i*\XA+1.5*\XA,\j*\YA+1.5*\XA) -- (\i*\XA+2*\XA,\j*\YA+2*\XA){};
\node[roundnode2] (a0) at (\i*\XA+1.5*\XA,\j*\YA+1.5*\XA) {}; 
\node[roundnode3] (a0) at (\i*\XA+1.75*\XA,\j*\YA+1.75*\XA) {}; 
\node[roundnode3] (a0) at (\i*\XA+1.75*\XA,\j*\YA+1.25*\XA) {}; 
\node[roundnode3] (a0) at (\i*\XA+1.25*\XA,\j*\YA+1.25*\XA) {}; 
\node[roundnode3] (a0) at (\i*\XA+1.25*\XA,\j*\YA+1.75*\XA) {}; 
%\draw [very thick,black,->] (8.2*\XA,0.4*\YA) .. controls (9*\XA,0*\YA) .. (9*\XA,-1*\YA){};
%\node[draw=none,fill=none,] at (9.6*\XA,-1.3*\YA) {$\Bo(u)$};
}}

\draw [very thick,black,->] (-0.3*\XA,5.5*\YA) .. controls (0*\XA,5.7*\YA) .. (0.3*\XA,5.5*\YA){};
\node[draw=none,fill=none,] at (-0.5*\XA,5.45*\XA) {$u$};

\draw [very thick,black,->] (1.6*\XA,5.6*\YA) -- (1.9*\XA,5.9*\YA){};
\node[draw=none,fill=none,] at (1.5*\XA,5.45*\XA) {$u_1$};
\draw [very thick,black,->] (1.6*\XA,4.6*\YA) -- (1.9*\XA,4.9*\YA){};
\node[draw=none,fill=none,] at (1.5*\XA,4.45*\XA) {$u_2$};

\draw [very thick,black,->] (-0.4*\XA,4.4*\YA) -- (-0.1*\XA,4.1*\YA){};
\node[draw=none,fill=none,] at (-0.6*\XA,4.5*\XA) {$u'_2$};
\draw [very thick,black,->] (0.6*\XA,4.4*\YA) -- (0.9*\XA,4.1*\YA){};
\node[draw=none,fill=none,] at (0.4*\XA,4.5*\XA) {$u'_1$};

\draw [very thick,black,->] (9.4*\XA,2.4*\YA) -- (9.1*\XA,2.1*\YA){};
\node[draw=none,fill=none,] at (9.5*\XA,2.5*\XA) {$v_1$};
\draw [very thick,black,->] (9.4*\XA,1.4*\YA) -- (9.1*\XA,1.1*\YA){};
\node[draw=none,fill=none,] at (9.5*\XA,1.5*\XA) {$v_2$};

\draw [very thick,black,->] (9.65*\XA,0.35*\YA) -- (9.9*\XA,0.1*\YA){};
\node[draw=none,fill=none,] at (9.5*\XA,0.55*\XA) {$v'_1$};
\draw [very thick,black,->] (10.65*\XA,0.35*\YA) -- (10.9*\XA,0.1*\YA){};
\node[draw=none,fill=none,] at (10.5*\XA,0.55*\XA) {$v'_2$};

\draw [very thick,black,->] (9.65*\XA,-2.5*\YA) -- (9.9*\XA,-2.9*\YA){};
\node[draw=none,fill=none,] at (9.5*\XA,-2.3*\XA) {$w_1$};
\draw [very thick,black,->] (10.65*\XA,-2.5*\YA) -- (10.9*\XA,-2.9*\YA){};
\node[draw=none,fill=none,] at (10.5*\XA,-2.3*\XA) {$w_2$};

\draw [very thick,black,->] (9.35*\XA,-4.4*\YA) -- (9.1*\XA,-4.1*\YA){};
\node[draw=none,fill=none,] at (9.5*\XA,-4.6*\XA) {$w'_1$};
\draw [very thick,black,->] (9.35*\XA,-5.4*\YA) -- (9.1*\XA,-5.1*\YA){};
\node[draw=none,fill=none,] at (9.5*\XA,-5.6*\XA) {$w'_2$};

\draw [very thick,black,<-] (10.7*\XA,1.5*\YA) .. controls (11*\XA,1.7*\YA) .. (11.2*\XA,1.5*\YA){};
\node[draw=none,fill=none,] at (11.3*\XA,1.4*\XA) {$v$};

\draw [very thick,black,->] (9.7*\XA,-3.5*\YA) .. controls (10*\XA,-3.3*\YA) .. (10.3*\XA,-3.5*\YA){};
\node[draw=none,fill=none,] at (9.5*\XA,-3.6*\XA) {$w$};

\foreach \i in {-6,-5,-4,-3,0,1,2,3,4,5,6,7} {
        \draw [very thin,black] (-1*\XA,\i*\YA) -- (6.5*\XA,\i*\YA){};
}
\foreach \i in {-6,-5,-4,-3,0,1,2,3,4,5,6,7} {
        \draw [very thin,black] (8.5*\XA,\i*\YA) -- (12*\XA,\i*\YA){};
}     
     
\foreach \i in {-1,...,6,9,10,11,12} {     
        \draw [very thin,black] (\i*\XA,-0.5*\YA) -- (\i*\XA,7*\YA){};
}

\foreach \i in {-1,...,6,9,10,11,12} {     
        \draw [very thin,black] (\i*\XA,-6*\YA) -- (\i*\XA,-2.5*\YA){};
}

\foreach \i in {-1,...,6,9,10,11,12} {     
        \draw [dotted,very thick,black] (\i*\XA,-2*\YA) -- (\i*\XA,-1*\YA){};
}

\foreach \i in {-6,-5,-4,-3,0,1,2,3,4,5,6,7} {
        \draw [dotted,very thick,black](7*\XA,\i*\YA) -- (8*\XA,\i*\YA){};
}    

\foreach \i in {-1,0,1,2,3,4,5,6,9,10,11,12} {
	\foreach \j in {-6,-5,-4,-3,0,1,2,3,4,5,6,7} {
			\node[roundnode] (a0) at (\i*\XA,\j*\YA) {};   
	}
}

\foreach \i in {-0.5,0.5,1.5,2.5,3.5,4.5,5.5,9.5,10.5,11.5} {
	\foreach \j in {-6,-5,-4,-3,0,1,2,3,4,5,6,7} {
			\node[roundnode3] (a0) at (\i*\XA,\j*\YA) {};   
	}
}

\foreach \i in {-1,0,1,2,3,4,5,6,9,10,11,12} {
	\foreach \j in {-5.5,-4.5,-3.5,0.5,1.5,2.5,3.5,4.5,5.5,6.5} {
			\node[roundnode3] (a0) at (\i*\XA,\j*\YA) {};   
	}
}

\draw [thin,black,<->] (-1.1*\XA,7.5*\YA) -- (2.1*\XA,7.5*\YA){};
\draw [thin,black,<->] (-1.1*\XA,8.3*\YA) -- (12.1*\XA,8.3*\YA){};
\draw [thin,black,<->] (2.9*\XA,7.5*\YA) -- (6.1*\XA,7.5*\YA){};
\draw [thin,black,<->] (8.9*\XA,7.5*\YA) -- (12.1*\XA,7.5*\YA){};
\node[draw=none,fill=none,] at (0.5*\XA,7.9*\YA) {$d$};
\node[draw=none,fill=none,] at (10.5*\XA,7.9*\YA) {$d$};
\node[draw=none,fill=none,] at (4.5*\XA,7.9*\YA) {$d$};
\node[draw=none,fill=none,] at (5.1*\XA,8.6*\YA) {$d^3$};

\draw [thin,black,<->] (-1.6*\XA,3.9*\YA) -- (-1.6*\XA,7.1*\YA){};
\draw [thin,black,<->] (-1.6*\XA,-0.1*\YA) -- (-1.6*\XA,3.1*\YA){};
\draw [thin,black,<->] (-1.6*\XA,-6.1*\YA) -- (-1.6*\XA,-2.9*\YA){};
\draw [thin,black,<->] (-2.4*\XA,-6.1*\YA) -- (-2.4*\XA,7.1*\YA){};
\node[draw=none,fill=none,] at (-2*\XA,5.5*\YA) {$d$};
\node[draw=none,fill=none,] at (-2*\XA,1.5*\YA) {$d$};
\node[draw=none,fill=none,] at (-2*\XA,-4.5*\YA) {$d$};
\node[draw=none,fill=none,] at (-2.9*\XA,0.5*\YA) {$d^3$};

\end{tikzpicture}
\caption{The paths considered to construct the cycle $\Cc$ in the proof of Proposition \ref{prop1} are depicted in red. The blue line show how the paths are completed to construct the cycle $\Cc$ in the case $i_1=1$, $i_2=2$ and $i_3=2$. Note that some vertices in $V_d^2$ are omitted in order to make the figure clearer.}\label{fig5}
\end{figure}
%\fi
%==================================
\subsection{Upper bound on the success probability}\label{subseqproof3}
%==================================
Let $(u,v,w)$ denote the triple from $\Uu\times \Vv\times \Ww$ whose existence is guaranteed by Proposition \ref{prop1}. Let $\Cc$, $q_1$, $q_2$ and $q_3$ denote the cycle and the three paths of Condition (ii) of the proposition.

Remember that each entry $A_{ij}$ of the input matrix~$A$ specifies the basis in which the qubit of vertex~$u_{ij}$ in the graph state $\ket{\overline{\Gg}_d}$ is measured. We will say that the vertex $u_{ij}$ is marked if $A_{ij}=1$.
The input matrix $A\in \{0,1\}^{k\times k}$ can then be constructed by first considering the $k^2-3$ entries corresponding to all vertices in $V_d^2\setminus\{u,v,w\}$, and then specifying the entries of the three vertices $u$, $v$ and $w$. This means that $A$ can be represented as a pair of strings $(a,b)$ where $a\in\{0,1\}^{k^2-3}$ and $b=(b_u,b_v,b_w)\in\{0,1\}^3$. 

The randomized classical circuit $C_d$ can be seen as a deterministic circuit receiving a random string~$r$. Let us fix the value of this random string. Let us also fix the string $a\in\{0,1\}^{k^2-3}$ and assume that the Hamming weight $|a|$ is even (note that $|a|$ corresponds to the number of marked vertices in $V_d^2\setminus \{u,v,w\}$). The only remaining variables are thus the three bits $b_u$, $b_v$ and $b_w$.

Observe that the graph $\overline{\Gg}_{d}$ remains connected when removing all the vertices on the cycle~$\Cc$, due to Conditions (ii-a) and (ii-b) of Proposition \ref{prop1}. No vertex from $V_d^2\setminus \{u,v,w\}$ appears in $\Cc$, from Condition~(ii-b) of Proposition \ref{prop1}. 
%This implies that there exists an $a$-covering $\{p_1,\ldots,p_{|a|/2}\}$ consisting of paths that do not use these edges. 
This implies that there exists a set of $|a|/2$ paths $\{p_1,\ldots,p_{|a|/2}\}$ such that
%\begin{itemize}
%\item[(1)]
$p_i\cap \Cc=\emptyset$ for each $i\in\{1,\ldots,|a|/2\}$, and
%\item[(2)]
each marked vertex in $V_d^2\setminus \{u,v,w\}$ appears once as an endpoint of one of these paths.
%\end{itemize}
Define the three bits
\begin{eqnarray*}
\lambda_1&=&\bigoplus_{\ell\in V_d} z_{\ell},\\
\lambda_2&=&\bigoplus_{i=1}^{|a|/2}\bigoplus_{\ell\in p_i\cap V_d^\ast} z_{\ell},\\
y&=&
\left\{
\begin{tabular}{ll}
$\lambda_1\oplus \lambda_2$&\textrm{ if }$|a|\bmod 4=0$,\\
$\lambda_1\oplus \lambda_2\oplus 1$&\textrm{ if }$|a|\bmod 4=2$.
\end{tabular}
\right.
\end{eqnarray*}
A crucial observation is that $y$ is an affine function of $b_u$, $b_v$ and $b_w$, due to Condition (i) of Proposition~\ref{prop1}.

%Let $q_1$ denote the path from $u$ to $v$ in the cycle $\Cc$.\footnote{There are actually two paths from $u$ to $v$ in the cycle $\Cc$: the direct path and the path going via $w$. It is crucial to choose the former.} Similarly, let $q_2$ denote the path from $v$ to $w$ in $\Cc$ and let $q_3$ denote the path from $w$ to $u$ in $\Cc$.
Define
\[
y_1=\bigoplus_{\ell\in q_1\cap V_d^\ast} z_\ell,\hspace{5mm}
y_2=\bigoplus_{\ell\in q_2\cap V_d^\ast} z_\ell,\hspace{5mm}
y_3=\bigoplus_{\ell\in q_3\cap V_d^\ast} z_\ell.
\]
Condition (i) of Proposition~\ref{prop1} again guarantees that $y_1$, $y_2$ and $y_3$ are affine functions of the three bits $b_u$, $b_v$, $b_w$. Moreover, Condition (ii-c) implies that $y_1$ does not depend on $b_u$, $y_2$ does not depend on $b_v$ and~$y_3$ does not depend on $b_w$.

Theorem \ref{th:cond1} implies that if the output of the circuit is in the set $\Lambda_d(A)$ (i.e., the output corresponds to a valid outcome arising from the corresponding measurement of the graph state $\ket{\overline{\Gg}_d}$), then the following condition should hold:
\begin{equation}\label{eqA1}
y_1\oplus y_2\oplus y_3=0 \hspace{3mm} \textrm{ for all } (b_u,b_v,b_w)\in\{0,1\}^3.
\end{equation}
Theorem \ref{th:cond2} additionally implies that if the output of the circuit is in the set $\Lambda_d(A)$ then the following condition should hold:
\begin{equation}\label{eqA2}
\left\{
\begin{tabular}{ll}
$y = 0$ &\textrm{ if } $(b_u,b_v,b_w)=(0,0,0)$,\\
$y \oplus y_1 = 1$ &\textrm{ if } $(b_u,b_v,b_w)=(0,1,1)$,\\
$y \oplus y_2 = 1$ &\textrm{ if } $(b_u,b_v,b_w)=(1,0,1)$,\\
$y \oplus y_3 = 1$ &\textrm{ if } $(b_u,b_v,b_w)=(1,1,0)$.
\end{tabular}
\right.
\end{equation}
Lemma \ref{lemma1} implies that there is at least one value for the triple $(b_u,b_v,b_w)$ for which these conditions are not satisfied.
 
We have just shown that for any value of $r$ and any value of $a$ such that $|a|$ is even, the output of the circuit $C_d$ is incorrect for at least a fraction $1/8$ of the strings $b=(b_u,b_v,b_w)\in\{0,1\}^{3}$. Since $|a|$ is even with probability $1/2$ when choosing the matrix $A$ uniformly at random, we conclude that for any value of~$r$ the output of the circuit is incorrect for at least a fraction $1/16$ of the matrices $A\in\{0,1\}^{k\times k}$. This implies the inequality
\[
\sum_{A\in\{0,1\}^{k\times k}}\Pr_r[C_d(A)\notin \Lambda_d(A)]\ge \frac{2^{k^2}}{16}.
\]
and thus
\[
\frac{1}{2^{k^2}}\sum_{A\in\{0,1\}^{k\times k}}\Pr_r[C_d(A)\in \Lambda_d(A)]< 15/16.
\]
This concludes the proof of Theorem \ref{th:main}.

%=========================================
\section{Soundness Amplification for Small-Depth Circuits}\label{sec:amp}
%=========================================
In this section we show how to obtain Theorem \ref{th:main-amp} from Theorem \ref{th:main}. In Section \ref{sub:amp-gen} we first present a general soundness amplification result that holds for any relation. Then in Section \ref{sub:amp-app} we apply this result to the relation $R_d$ of Theorem \ref{th:main} in order to obtain Theorem \ref{th:main-amp}. 

\subsection{General result}\label{sub:amp-gen}
Consider any relation $\Rgen\subseteq \{0,1\}^m\times \{0,1\}^n$ for some positive integers $m$ and $n$. As usual, this relation is interpreted as the following computational problem: given as input a string $x\in \{0,1\}^m$, output one string from the set $\Rgen(x)=\{z\in\{0,1\}^n\:|\:(x,z)\in\Rgen\}$. 
For any integer $t\ge 1$, now consider the following computational problem: given as input $t$ strings $x_1,\ldots,x_t\in \{0,1\}^m$, output one element from the set $\Rgen(x_1)\times \cdots\times \Rgen(x_t)$. This computational problem corresponds to the direct product of $t$ copies of the relation $\Rgen$. We will write this relation $\Rgen^{\times t}$ and interpret it as the subset
\[
\Rgen^{\times t}\subseteq \{0,1\}^{mt}\times \{0,1\}^{nt}
\]
by associating $\{0,1\}^{mt}$ with the $t$ copies of  $\{0,1\}^{m}$ and $\{0,1\}^{nt}$ with the $t$ copies of $\{0,1\}^{n}$.

The main result of this section is the following repetition theorem, which shows that if $\Rgen$ cannot be computed with average success probability larger than $1-\alpha$ using small-depth classical circuits, then $\Rgen^{\times t}$ cannot be computed with average success probability larger than $(1-\alpha)^{t'}$ for some $t'\approx t$ by circuits of the same depth. The idea is to show how to extract, from the $t$ copies of $\Rgen$ making $\Rgen^{\times t}$, at least~$t'$ copies on which the circuit acts independently. 
%The proof of this statement relies crucially on the fact that we are focusing on small-depth circuits (i.e., $t'$ depends on the depth of the circuits).
\begin{theorem}\label{th:parallel}
Let $\Rgen\subseteq \{0,1\}^m\times \{0,1\}^n$ be a relation for which the following assertion holds for some real numbers $c\ge 0$ and $\alpha\in[0,1]$: any $m$-input $n$-output randomized circuit $C$ with bounded-fanin gates and depth at most $c\log_2 m$ satisfies the inequality 
\[
\frac{1}{2^m}\sum_{x\in\{0,1\}^m}\Pr[C(x)\in \Rgen(x)]< 1-\alpha.
\]
Let $t$ be any integer such that $t\ge 6nm^c+2$.
Then any $(mt)$-input $(nt)$-output randomized circuit $C'$ with bounded-fanin gates and depth at most $c\log_2 m$ satisfies
\[
\frac{1}{2^{mt}}\sum_{x'\in\{0,1\}^{mt}}\Pr[C'(x')\in \Rgen^{\times t}(x')] < (1-\alpha)^{t/(6m^{c}n+2)}.
\]
\end{theorem}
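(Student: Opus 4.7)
The plan is to locate, inside any depth-$c\log_2 m$ bounded-fanin circuit $C'$ that purportedly solves $\Rgen^{\times t}$ with good average success, a set $I$ of at least $t/(6nm^c+2)$ copies whose outputs are computed by pairwise non-interacting sub-circuits, and then to apply the single-copy hypothesis to each sub-circuit in $I$ independently.

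First I would control lightcones. Since $C'$ has fanin at most $2$ and depth at most $c\log_2 m$, every output wire of $C'$ depends on at most $m^c$ input wires. For each copy $i\in\{1,\ldots,t\}$ of the $t$-fold product, let $L_i$ denote the union of the lightcones of copy $i$'s $n$ output wires, so $|L_i|\le nm^c$. I would build a directed conflict graph $H$ on vertex set $\{1,\ldots,t\}$ by placing an arc $i\to j$ whenever $j\neq i$ and some input wire of copy $j$ belongs to $L_i$. Out-degrees in $H$ are at most $nm^c$, hence
\[
\sum_{i=1}^t\mathrm{indeg}_H(i)=\sum_{i=1}^t\mathrm{outdeg}_H(i)\le tnm^c.
\]

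Next I would extract a large independent set. By Markov's inequality at most $t/3$ vertices of $H$ have in-degree exceeding $3nm^c$; discarding those leaves a set $V'$ of size at least $2t/3$, and every $v\in V'$ has (undirected) degree at most $nm^c+3nm^c=4nm^c$ in the induced subgraph. A standard greedy procedure then produces an independent set $I\subseteq V'$ of size
\[
|I|\ge\frac{2t/3}{4nm^c+1}\ge\frac{t}{6nm^c+2}.
\]
By construction, for any distinct $i,j\in I$, $L_i$ contains no input wire of copy $j$.

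Finally I would complete the reduction. Fix any assignment of the random bits $r$ of $C'$ and any assignment $\{x_j\}_{j\notin I}$ of inputs to the copies outside $I$. Under these fixings, the output of $C'$ on copy $i\in I$ is realized by a deterministic bounded-fanin sub-circuit $C'_i$ of depth at most $c\log_2 m$ reading only $x_i$. The single-copy hypothesis applied to $C'_i$ gives
\[
\frac{1}{2^m}\sum_{x_i\in\{0,1\}^m}\Pr[C'_i(x_i)\in\Rgen(x_i)]<1-\alpha,
\]
and because the variables $\{x_i\}_{i\in I}$ are independent and no $C'_i$ reads any $x_j$ with $j\in I\setminus\{i\}$, the success events for different $i\in I$ are mutually independent, so
\[
\Pr_{\{x_i\}_{i\in I}}\!\left[\,\forall i\in I:\ C'_i(x_i)\in\Rgen(x_i)\,\right]=\prod_{i\in I}\Pr_{x_i}[C'_i(x_i)\in\Rgen(x_i)]<(1-\alpha)^{|I|}\le(1-\alpha)^{t/(6nm^c+2)}.
\]
Averaging this pointwise strict inequality over $r$ and $\{x_j\}_{j\notin I}$, and observing that success of $C'$ on all $t$ copies implies success on the copies in $I$, yields the claimed bound. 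The main obstacle is the independent-set extraction: out-degrees in $H$ are directly controlled by the lightcone bound, but in-degrees can a priori be large since the fanout of the circuit is not assumed bounded, so the Markov-style filtering is what keeps $|I|$ polynomially large relative to $t$.
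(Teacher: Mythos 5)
Your proof is correct and takes essentially the same route as the paper: filter out copies whose lightcones are too entangled with others, extract a large independent set in the resulting conflict graph (via the degree bound $\Delta$ and the greedy $N/(\Delta+1)$ bound), then fix the randomness and the out-of-$I$ inputs so that the copies in $I$ are computed by pairwise non-interacting depth-$c\log_2 m$ subcircuits, and apply the single-copy hypothesis independently to each. The only cosmetic difference is that you prune by in-degree in a directed conflict graph whereas the paper prunes by the sum $\sum_{x\in S_i}|L(x)|$ of forward lightcone sizes; both are controlled by the same counting $\sum_z |L(z)|\le ntm^c$, and your constants ($2t/3$ surviving vertices with degree at most $4nm^c$) recover the same final bound $t/(6nm^c+2)$.
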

\begin{proof}
Consider any $(mt)$-input $(nt)$-output randomized circuit $C'$ with gates of fanin at most 2 and depth at most $c\log_2 m$ for the relation $\Rgen^{\times t}$. For each $i\in\{1,\ldots,t\}$, let $S_i$ denote the set of wires corresponding to the inputs of the $i$-th copy of $\Rgen$ in $\Rgen^{\times t}$ and $T_i$ denote the set of wires corresponding to the outputs of the $i$-th copy of $\Rgen$ in $\Rgen^{\times t}$. The following claim is the crucial part of the proof.
\begin{claim}\label{claim:I}
There exists a subset of indices $I\subseteq \{1,\ldots,t\}$ of size 
$|I|\ge \frac{t}{6nm^c+2}$
such that $L(S_i)\cap T_j=\emptyset$ for all distinct $i,j\in I$.
\end{claim}
\begin{proof}
Define the set
\begin{align*}
J =& \Big\{
i\in \{1,\ldots,t\}\:|\: \sum_{x\in S_i}|L(x)|\le 2m^{c}n
\Big\}.
\end{align*}
Since the circuit $C'$ has depth $c\log_2 m$ and its gates have fanin at most 2, we have $|L(z)|\le m^c$ for any output wire $z$. Since the total number of output wires is $nt$, a simple counting argument shows that $|J|\ge t/2$.

Let us now construct a graph on the vertex set $J$ as follows: two distinct vertices $i,j\in J$ are connected by an edge if and only if at least one of $L(S_i)\cap T_j\neq\emptyset$ and $L(S_j)\cap T_i\neq \emptyset$ holds. In this graph each vertex has degree at most $2m^{c}n+m^{c}n=3m^{c}n$. There thus exists\footnote{Here we are using a trivial result from graph theory that states that a graph of maximum degree $\Delta$ on $N$ vertices has an independent set of size at least $N/(\Delta+1)$.} an independent set  $I\subseteq J$ of $G$ of size
\[
|I|\ge \frac{t/2}{3m^{c}n+1}=\frac{t}{6m^{c}n+2}.
\]
This independent set is precisely the set of indices we wanted to construct. 
\end{proof}

To lighten the notation we will assume that the set $I$ from Claim~\ref{claim:I} is $I=\{1,\ldots,\ell\}$ for some integer~$\ell$ (with $\ell\ge \frac{t}{6m^{c}n+2}$). This assumption can be made without loss of generality. Claim \ref{claim:I} implies that when the values of all the input wires in $S_{\ell+1}\cup\cdots\cup S_t$ are fixed, then for each $i\in\{1,\ldots, \ell\}$ the values of the output wires in $T_i$ only depend on the values of the input wires in $S_i$.
This implies that for any $(x_{\ell+1},\ldots,x_t)\in \{0,1\}^{(t-\ell)m}$ the inequality 
\begin{align*}
\frac{1}{2^{m\ell}}\sum_{x_1,\ldots,x_\ell\in\{0,1\}^{m}}\Pr[C'(x_1,\ldots,x_\ell,x_{\ell+1},\ldots,x_t)\in \Rgen^{\times t}(x_1,\ldots,x_\ell,x_{\ell+1},\ldots,x_t)]
< (1-\alpha)^{\ell},
\end{align*}
holds, from our assumption on the relation $\Rgen$ (since the depth of $C'$ is at most $c\log_2 m$). Thus
\begin{align*}
\frac{1}{2^{mt}}\sum_{x_1,\ldots,x_t\in\{0,1\}^{m}}\Pr[C'(x_1,\ldots,x_t)\in \Rgen^{\times t}(x_1,\ldots,x_t)]
< (1-\alpha)^{\ell}\le (1-\alpha)^{t/(6m^{c}n+2)},
\end{align*}
as claimed. This concludes the proof of Theorem \ref{th:parallel}.
\end{proof}

\subsection{Application: proof of Theorem \ref{th:main-amp}}\label{sub:amp-app}
We are now able to give the proof of Theorem \ref{th:main-amp}.
\begin{proof}[Proof of Theorem \ref{th:main-amp}]
We consider the relation $R_d\subseteq \{0,1\}^m\times \{0,1\}^n$ defined in Section \ref{sub:def} and used in Theorem \ref{th:main}. Remember that for this relation we have $m=\Theta(d^6)$ and $n=\Theta(d^6)$. Take the integer $t=\ceil{(6nm^{1/8}+2)^3}$ and observe that the inequality $t\ge m^{27/8}$ holds. Define $\Ramp=R_d^{\times t}$. The sizes of the inputs and outputs in $\Ramp$ are $M=mt$ and $N=nt$, respectively. Observe that $t\ge m^{27/8}$ implies $t\ge M^{27/35}$.
Theorem~\ref{th:main} and then Theorem \ref{th:parallel} with $\Rgen=R_d$ imply that there exist constants $c> 0$ and $\alpha>0$ such that
any $M$-input $N$-output randomized circuit $C'$ with bounded-fanin gates and depth at most $c\log_2 m$ satisfies
\[
\frac{1}{2^{M}}\sum_{x'\in\{0,1\}^{M}}\Pr[C'(x')\in \Ramp(x')] <\left(1-\alpha\right)^{t^{2/3}}\le\left(1-\alpha\right)^{M^{54/105}}<\left(1-\alpha\right)^{\sqrt{M}},
\] 
which leads to the claimed statement.
\end{proof}

\section*{Acknowledgments}
The author is grateful to Keisuke Fujii, Tomoyuki Morimae, Harumichi Nishimura, Ansis Rosmanis and Yasuhiro Takahashi for helpful discussions. The author also thanks Gabriel Senno, Jalex Stark, Thomas Vidick and anonymous reviewers for comments about the manuscript.
This work was partially supported by the JSPS KAKENHI grants No.~15H01677, No.~16H01705 and No.~16H05853. 

%\bibliographystyle{alpha}
%\bibliography{QNC0}

\newcommand{\etalchar}[1]{$^{#1}$}

\end{document}